\let\MYcaption\@makecaption
\let\@makecaption\MYcaption
\theoremstyle{definition}
\newtheorem{definition}{Definition}
\newtheorem{theorem}{Theorem}
\newtheorem{lemma}{Lemma}
\newcommand\wzm[1]{\textcolor{black}{#1}}
\begin{document}
\title{Practical Vertical Federated Learning with Unsupervised Representation Learning}

\author{Zhaomin Wu\textsuperscript{\rm 1}, Qinbin Li\textsuperscript{\rm 2}, Bingsheng He\textsuperscript{\rm 3} \\
\{zhaomin\textsuperscript{\rm 1},qinbin\textsuperscript{\rm 2},hebs\textsuperscript{\rm 3}\}@comp.nus.edu.sg \\
National University of Singapore}%

\markboth{IEEE Transactions on Big Data, DOI: 10.1109/TBDATA.2022.3180117}%
{}

\IEEEoverridecommandlockouts

\IEEEpubid{\begin{minipage}{\textwidth}\ \\[12pt] \centering
  2332-7790 \copyright 2022 IEEE. Personal use is permitted, but republication/redistribution requires IEEE permission.\\
  See http://www.ieee.org/publications standards/publications/rights/index.html for more information.
\end{minipage}}

\IEEEtitleabstractindextext{%
\begin{abstract}
As societal concerns on data privacy recently increase, we have witnessed data silos among multiple parties in various applications. Federated learning emerges as a new learning paradigm that enables multiple parties to collaboratively train a machine learning model without sharing their raw data. \emph{Vertical federated learning}, where each party owns different features of the same set of samples and only a single party has the label, is an important and challenging topic in federated learning. Communication costs among different parties have been a major hurdle for practical vertical learning systems. In this paper, we propose a novel communication-efficient vertical federated learning algorithm named \textit{FedOnce}, which requires only one-shot communication among parties. To improve model accuracy and provide privacy guarantee, FedOnce features unsupervised learning representations in the federated setting and privacy-preserving techniques based on moments accountant. The comprehensive experiments on 10 datasets demonstrate that FedOnce achieves close performance compared to state-of-the-art vertical federated learning algorithms with much lower communication costs. Meanwhile, our privacy-preserving technique significantly outperforms the state-of-the-art approaches under the same privacy budget.
\end{abstract}

\begin{IEEEkeywords}
vertical federated learning, differential privacy
\end{IEEEkeywords}}

\maketitle

\IEEEpubidadjcol
\IEEEdisplaynontitleabstractindextext
\IEEEpeerreviewmaketitle

\IEEEraisesectionheading{\section{Introduction}}\label{sec:introduction}

\IEEEPARstart{D}{ata} silos, where physically distributed data cannot be collected on a central server due to privacy concerns and regulations, have been witnessed in many real-world applications. \wzm{ \textit{Federated learning} \cite{kairouz2019advances,li2019federated,zhang2022robust,zhou2019privacy}, which can be classified as \textit{horizontal federated learning} and \textit{vertical federated learning} according to the data partitioning among parties~\cite{yang2019federated}, is proposed as an emerging learning paradigm to enable collaborative training without exposing the data of each party. Different from horizontal federated learning where each party shares the same feature space but owns only a subset of samples, vertical federated learning, where each party shares the same sample space but only holds a subset of features, is a common and important scenario in practice.}

\textit{Privacy by design} (PbD), as a key concept in \wzm{General Data Protection Regulation (GDPR)} \cite{voigt2017eu}, states that personal data from different services/applications under the same company should not be directly shared with each other. This concept motivates our study to develop a collaborative machine learning framework to train a privacy-preserving model from all parties. Particularly, we consider each service/application as a party in federated learning and assume that the party holding labels, named \textit{host party}, is trusted by other parties without labels, named \textit{guest parties}. Such trust is practical as all the parties are under the same company. Fig.~\ref{fig:vfl} presents a concrete example of this setting\wzm{, which is an end-to-end procedure containing model training and model deployment}. An e-commerce company provides two major services: 1) an online shopping service managed by the retailing department and 2) an e-banking service managed by the Fintech department. This company aims to train a machine learning model to recommend its users the proper investing amount. Under the principle of privacy by design, although the Fintech department can be trusted by the retailing department, direct data sharing is prohibited. Therefore, these departments have to perform training by transferring intermediate information. After the training, a model can be released to provide recommendation services for the users of this e-commerce company. \wzm{Typical applications of federated learning includes Google Keyboard \cite{yang2018applied} and nationwide learning health system \cite{friedman2010achieving}.}

\begin{figure}
    \centering
    \includegraphics[width=.9\linewidth]{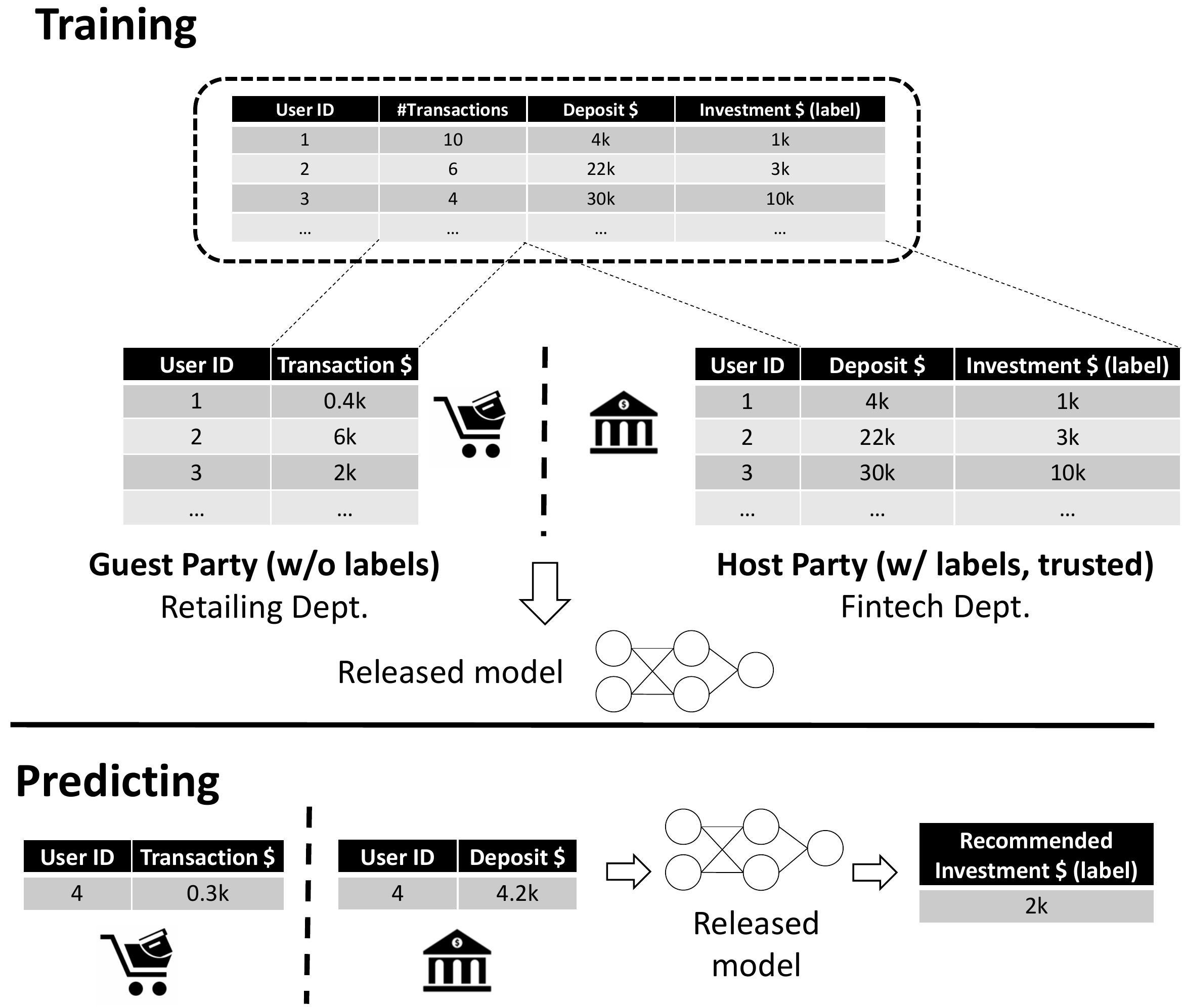}
    \caption{Example of vertical federated learning}
    \label{fig:vfl}
\end{figure}

\wzm{Considering the model training, }communication issues have limited the practical adoption of vertical federated learning in many real-world applications~\cite{loghin2020disruptions}. For example, such limitation becomes non-negligible when parties are 1) mobile devices communicating through Cellular networks, 2) IoT devices communicating through Wi-Fi networks, 3) seagoing ships communicating through satellites. These applications, featuring unstable connections and expensive communication cost, cannot be handled by existing vertical federated learning algorithms for two major reasons, including 1) \textbf{high stability requirement}: existing vertical federated learning approaches require all the parties to stay online or even synchronous during the entire training process, which is unrealistic for parties connected by unstable networks; 2) \textbf{high communication cost}: current studies in vertical federated learning usually incur large communication overhead, leading to the considerable economic cost. For example, in {each training iteration}, SecureBoost~\cite{cheng2019secureboost}, VF$^2$Boost~\cite{fu2021vf2boost}, and Pivot~\cite{wu2020privacy} exchange encrypted intermediate results (e,g, gradients) between the host party and guest parties. Similarly, when training {each batch} of data, SplitNN~\cite{vepakomma2018split} transfers outputs and gradients of a common layer between the host party and guest parties. All these approaches require batch/iteration-level synchronization and high communication cost.

To improve the practicability of vertical federated learning, designing a learning algorithm with a single communication round (i.e., one-shot), which can highly reduce the stability requirement and the communication cost, becomes increasingly desirable. Although some one-shot approaches \cite{guha2019one,zhou2020distilled,sharifnassab2021order} have been proposed for horizontal federated learning where each party shares the same feature space but owns only a subset of samples, these approaches cannot be directly adopted in vertical federated learning. The key challenge is that the labels only exist in a single party in vertical federated learning. These labels, which cannot be shared among parties due to privacy concerns, are necessary for the training of most models. For example, calculating gradients in neural networks or gradient boosting decision trees \cite{chen2016xgboost} relies on the labels. To the best of our knowledge, one-shot vertical federated learning algorithms have not been explored yet.

In this paper, we propose FedOnce, a novel one-shot vertical federated learning algorithm, the source code\footnote{https://github.com/JerryLife/FedOnce} of which is available online. FedOnce is designed for neural networks which are broadly used in many applications. Instead of aggregating the local models in horizontal learning, we propose the idea that the host party trains a model to aggregate the representations collected from guest parties. Specifically, guest parties first extract representative features by \emph{unsupervised learning}~\cite{bojanowski2017unsupervised}. These representative features, which can effectively boost the performance of federated learning, are sent to the host party. After receiving all the representative features, the host party trains an aggregation model based on these features and its own data. 

Considering the model deployment after the training, these released models are threatened by membership inference attack \cite{Shokri2017MembershipIA} \wzm{which detects the existence of training instances by inference. To protect the trained models in FedOnce from this attack, we exploit differential privacy \cite{dwork2014algorithmic} to theoretically guarantee that the training samples is indistinguishable given the trained model.} Despite many studies that apply differential privacy to horizontal federated learning \cite{zhao2018inprivate,jayaraman2018distributed,liu2020flame}, their privacy analyses cannot be directly applied to vertical federated learning. The main reason is that vertical federated learning requires analyzing inter-party privacy loss since each data record is distributed to multiple parties, whereas horizontal federated learning only requires analyzing inner-party privacy loss since each data record is held by a single party. Some existing studies~\cite{lou2018uplink,yao2019privacy,lou2020uplink} employ differential privacy in vertical federated learning. Nonetheless, these studies, focusing on inner-party privacy loss, only consider simple composition when analyzing inter-party privacy loss. In FedOnce, by refining the analysis of inter-party privacy loss based on moments accountant~\cite{abadi2016deep}, we significantly reduce the privacy loss compared to existing approaches.

Our main contribution can be summarized as:
\begin{itemize}
    \item We propose a practical one-shot vertical federated learning algorithm \textit{FedOnce} by effectively taking advantage of unsupervised learning.  
    \item We develop privacy techniques for FedOnce under the notion of differential privacy. Meanwhile, we prove a tighter bound on the privacy loss across parties compared to the current work on vertical federated learning.
    \item Our experiments demonstrate that FedOnce reduces the required communication cost to achieve competitive performance against state-of-the-art vertical federated learning algorithms \cite{vepakomma2018split,cheng2019secureboost} by at most 94\%. Meanwhile, our privacy-preserving technique significantly outperforms other approaches under the same privacy budget.
\end{itemize}
 
\section{Preliminaries}\label{sec:preliminaries}
\subsection{Differential Privacy}
Differential privacy \cite{dwork2008differential} is a technique that provides a theoretical guarantee of the privacy of individuals.

\begin{definition} (Differential Privacy) Suppose \(\mathcal{M}:\mathcal{D}\rightarrow\mathcal{R}\) is a randomized mechanism, $D$ and $D'$ are two neighboring datasets in domain $\mathcal{D}$, i.e., $\|D-D'\|\leq 1$. Then, $\mathcal{M}$ satisfies $(\varepsilon,\delta)$-differential privacy if for any outputs $S\subseteq \mathcal{R}$, 

$$ \Pr[\mathcal{M}(D)\in S]\leq e^\varepsilon \Pr[\mathcal{M}(D')\in S]+\delta$$
where $\varepsilon$ and $\delta$ are privacy parameters.
\end{definition}
In the definition of $(\varepsilon,\delta)$-differential privacy, $\varepsilon$ bounds the difference between the outputs of two neighboring datasets, and this bound might be violated by a small probability $\delta$, which is typically smaller than $1/|D|$, where $|D|$ is the number of samples in $D$.

Among many privacy mechanisms that achieve differential privacy, \textit{Gaussian Mechanism}~\cite{dwork2014algorithmic} is widely adopted in deep learning. It ensures the differential privacy of a deterministic function $f$ by adding Gaussian noise to the output. The noise scale $\sigma$ is determined by the privacy parameters $\varepsilon,\delta$ and the $\ell_2$-\textit{sensitivity} of the function.
\begin{definition} ($\ell_2$-sensitivity)
The $\ell_2$-sensitivity of function $f:\mathcal{D}\rightarrow\mathcal{R}$ is defined as
$$ \Delta_2 (f) \triangleq \max_{D,D'}\Vert f(D)-f(D')\Vert_2 $$
where $D$, $D'$ are neighboring datasets, i.e., $\|D-D'\|\leq 1$.
\end{definition}
\begin{theorem} (Gaussian Mechanism)
For any $\varepsilon\in(0,1)$, $c^2>2\ln(1.25/\delta),\;\sigma=c\Delta_2(f)/\varepsilon$, Gaussian mechanism
$$ \mathcal{M}(D)\triangleq f(D)+\mathcal{N}\left(0,\sigma^2\textbf{I} \right) $$
satisfies $(\varepsilon,\delta)$-differential privacy.
\end{theorem}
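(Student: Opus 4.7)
The plan is to follow the standard argument that reduces the analysis to a one-dimensional tail probability on the \emph{privacy loss random variable}. Fix two neighboring datasets $D, D'$ and let $\mu_0 = f(D)$, $\mu_1 = f(D')$. By the definition of $\ell_2$-sensitivity, $\|\mu_1 - \mu_0\|_2 \leq \Delta := \Delta_2(f)$. The densities of $\mathcal{M}(D)$ and $\mathcal{M}(D')$ are isotropic Gaussians centered at $\mu_0$ and $\mu_1$, so by rotational invariance I may assume, without loss of generality, that $\mu_0 = 0$ and $\mu_1 - \mu_0$ is aligned with the first coordinate axis with magnitude exactly $\Delta$. All coordinates orthogonal to this axis contribute equally to both densities and cancel in the log-ratio, so the analysis reduces to a 1D problem in $x_1$.

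Next, I would analyze the privacy loss $L(x) = \ln(p_0(x)/p_1(x))$. A direct expansion gives $L(x) = \Delta^2/(2\sigma^2) - \Delta x_1/\sigma^2$, which is affine in $x_1$, so the event $\{L(x) > \varepsilon\}$ is a half-line. Under $x \sim \mathcal{M}(D)$, one has $x_1 \sim \mathcal{N}(0, \sigma^2)$, and after substituting $\sigma = c\Delta/\varepsilon$ and standardizing, the probability of this event takes the clean form $\Pr[Z > c - \varepsilon/(2c)]$, with $Z$ standard normal.

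The main technical step is the Gaussian tail estimate. I would use the Mills-ratio bound $\Pr[Z > t] \leq \frac{1}{t\sqrt{2\pi}} e^{-t^2/2}$ with $t = c - \varepsilon/(2c)$, whose square equals $c^2 - \varepsilon + \varepsilon^2/(4c^2)$. Plugging in and using the hypothesis $c^2 > 2\ln(1.25/\delta)$ together with $\varepsilon \in (0,1)$ should yield a bound of at most $\delta$. The specific constant $1.25$ arises from absorbing the prefactor $1/(t\sqrt{2\pi})$, the lower-order correction $\varepsilon^2/(8c^2)$, and the factor $e^{\varepsilon/2}$ into a single uniform constant over the allowed ranges of $\varepsilon$ and $\delta$. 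I expect this arithmetic tightening, rather than any conceptual step, to be the main obstacle, since the bound has to be made uniform without any dependence on the specific $\varepsilon$ in $(0,1)$.

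Finally, I would convert the tail bound on the privacy loss into $(\varepsilon, \delta)$-differential privacy via the standard reduction. Let $B = \{x : L(x) > \varepsilon\}$, so the previous step yields $\Pr[\mathcal{M}(D) \in B] \leq \delta$. Then for any measurable $S \subseteq \mathcal{R}$, the decomposition $\Pr[\mathcal{M}(D) \in S] \leq \Pr[\mathcal{M}(D) \in S \setminus B] + \Pr[\mathcal{M}(D) \in B]$ gives $\Pr[\mathcal{M}(D) \in S] \leq e^\varepsilon \Pr[\mathcal{M}(D') \in S] + \delta$, where the first term uses $p_0(x) \leq e^\varepsilon p_1(x)$ on $B^c$. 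Swapping the roles of $D$ and $D'$ (valid by symmetry of the isotropic Gaussian and of the sensitivity bound) supplies the matching direction, completing the verification of $(\varepsilon, \delta)$-differential privacy.
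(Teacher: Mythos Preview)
The paper does not actually prove this theorem; it is stated in the preliminaries as a classical result quoted from Dwork and Roth's monograph \emph{The Algorithmic Foundations of Differential Privacy}, and no argument for it appears anywhere in the paper. Your proposal is correct and is exactly the standard proof given in that reference: reduce to one dimension by rotational invariance of the isotropic Gaussian, compute the privacy-loss random variable (which is affine in the relevant coordinate), bound the tail $\Pr[Z > c - \varepsilon/(2c)]$ via the Mills-ratio inequality, and convert the tail bound into $(\varepsilon,\delta)$-differential privacy via the good/bad set decomposition. Your identification of the arithmetic tightening around the constant $1.25$ as the only delicate point is accurate; there is nothing conceptually missing.
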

Gaussian mechanism achieves the differential privacy of a single function. Furthermore, if multiple mechanisms are applied to one dataset, the overall differential privacy follows simple composition~\cite{dwork2014algorithmic}. 
\begin{theorem} (Simple Composition)\label{thm:simple}
Let $\mathcal{M}_i:\mathcal{D}\rightarrow \mathcal{R}$ be a $(\varepsilon_i,\delta_i)$-differential privacy mechanism. Then a composition of $k$ such mechanisms $\mathcal{M}_k=(\mathcal{M}_1,...,\mathcal{M}_k)$ satisfies $\left(\sum_{i=1}^k{\varepsilon_i}, \sum_{i=1}^k{\delta_i}\right)$-differential privacy.
\end{theorem}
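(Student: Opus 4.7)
The plan is to proceed by induction on $k$. The base case $k=1$ is immediate from the definition of $(\varepsilon_1,\delta_1)$-differential privacy. For the inductive step, assuming the result for any $k-1$ mechanisms, it suffices to prove the two-mechanism case: if $\mathcal{M}_A$ is $(\varepsilon_A,\delta_A)$-DP and $\mathcal{M}_B$ is $(\varepsilon_B,\delta_B)$-DP (independent of $\mathcal{M}_A$), then the pair $(\mathcal{M}_A,\mathcal{M}_B)$ is $(\varepsilon_A+\varepsilon_B,\,\delta_A+\delta_B)$-DP. Applying this with $\mathcal{M}_A=(\mathcal{M}_1,\ldots,\mathcal{M}_{k-1})$ and $\mathcal{M}_B=\mathcal{M}_k$ completes the induction.

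The central tool I would use is the standard ``bad-event'' characterization of approximate differential privacy: a mechanism $\mathcal{M}$ is $(\varepsilon,\delta)$-DP if and only if, for every pair of neighboring datasets $D,D'$, there is a set $B\subseteq\mathcal{R}$ with $\Pr[\mathcal{M}(D)\in B]\leq\delta$ such that on the complement $\mathcal{R}\setminus B$ the density ratio $\Pr[\mathcal{M}(D){=}r]/\Pr[\mathcal{M}(D'){=}r]$ is at most $e^{\varepsilon}$ (and symmetrically). Applied to $\mathcal{M}_A$ and $\mathcal{M}_B$ separately, this yields bad sets $B_A$ and $B_B$ with total probability at most $\delta_A+\delta_B$ by a union bound. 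On the complement of $B_A\times\mathcal{R}_B\,\cup\,\mathcal{R}_A\times B_B$, the product distributions have density ratio $e^{\varepsilon_A}\cdot e^{\varepsilon_B}=e^{\varepsilon_A+\varepsilon_B}$, since the mechanisms are independent. Converting back via the same characterization gives $(\varepsilon_A+\varepsilon_B,\delta_A+\delta_B)$-DP.

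The main obstacle I expect is getting the additive $\delta$ term exactly right. A naive direct manipulation, where one writes $\Pr[(\mathcal{M}_A(D),\mathcal{M}_B(D))\in T]$ as an integral over $\mathcal{M}_B(D)$, applies the $(\varepsilon_A,\delta_A)$-DP bound pointwise, and then applies the $(\varepsilon_B,\delta_B)$-DP bound to the remaining integral, only yields a bound of the form $e^{\varepsilon_A+\varepsilon_B}\Pr[\ldots(D')\ldots]+e^{\varepsilon_A}\delta_B+\delta_A$. The $e^{\varepsilon_A}$ multiplicative factor on $\delta_B$ is strictly worse than the claimed $\delta_A+\delta_B$. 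Avoiding this compounding is the whole point of invoking the bad-event decomposition: it lets the failure events of the individual mechanisms combine additively through the union bound rather than multiplicatively through nested inequalities. Once this subtlety is handled, the remaining steps — verifying the symmetric inequality with $D,D'$ swapped and iterating the two-mechanism case along the inductive hypothesis — are routine.
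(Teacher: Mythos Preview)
Your proposal is correct, but there is nothing to compare it against: the paper does not prove Theorem~\ref{thm:simple}. Simple Composition is stated in the Preliminaries as a background result cited from Dwork and Roth's monograph, and the paper's own technical contribution (Theorem~\ref{thm:main} and Lemma~\ref{lem:compose}) concerns moments-accountant composition, not basic $(\varepsilon,\delta)$ composition. Your induction-plus-bad-event argument is the standard textbook proof of the cited result, and your diagnosis of why the naive nested-inequality approach picks up an unwanted $e^{\varepsilon_A}\delta_B$ factor is exactly the right observation.
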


Simple composition holds regardless of the distribution of $\mathcal{M}_i$ as long as it satisfies $(\varepsilon_i,\delta_i)$-differential privacy. When all the mechanisms $\mathcal{M}_i$ are Gaussian mechanisms, this privacy bound can be improved by moments accountant~\cite{abadi2016deep} which first bounds the Rényi divergence~\cite{mironov2017renyi} of the composition and then calculates the overall privacy loss $\varepsilon$ given a specific $\delta$. The main theorem of moments accountant is concluded as Theorem~\ref{thm:ma}.
\begin{theorem}\label{thm:ma}
Given the sampling probability of SGD $q$, the number of epochs $T$, there exists constants $c_1$ and $c_2$ so that for $\forall\varepsilon<c_1q^2 T,\;\forall\delta>0$, \wzm{differentially private SGD \cite{abadi2016deep} with noise scale $\sigma$} is $(\varepsilon,\delta)$-differential privacy if we choose
$$ \sigma\geq c_2\frac{q\sqrt{log(1/\delta)T}}{\varepsilon} $$
\end{theorem}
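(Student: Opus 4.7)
The plan is to follow the \emph{moments accountant} framework. For any neighboring $D,D'$ and mechanism $\mathcal{M}$, I would define the privacy loss random variable $c(o) = \log\bigl(\Pr[\mathcal{M}(D)=o]/\Pr[\mathcal{M}(D')=o]\bigr)$ and the log moment generating function $\alpha_{\mathcal{M}}(\lambda) = \log \mathbb{E}_{o \sim \mathcal{M}(D)}\bigl[\exp(\lambda c(o))\bigr]$, taken as a supremum over neighboring pairs and over any auxiliary input from previous rounds. The whole argument then reduces to bounding $\alpha$ for one SGD step and composing.

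The first two ingredients are generic. \textbf{Composability:} if $\mathcal{M}$ is the adaptive composition of $\mathcal{M}_1, \ldots, \mathcal{M}_T$, then $\alpha_{\mathcal{M}}(\lambda) \leq \sum_{t=1}^T \alpha_{\mathcal{M}_t}(\lambda)$; this follows by conditioning and using the tower property of expectation. \textbf{Tail bound:} by Markov's inequality applied to $\exp(\lambda c)$, if $\alpha_{\mathcal{M}}(\lambda) \leq \beta$ then $\mathcal{M}$ is $(\varepsilon,\delta)$-DP for any $\varepsilon$ with $\delta \leq \exp(\beta - \lambda\varepsilon)$. Both are direct and follow Abadi et al.

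The central technical step is to bound $\alpha_{\mathcal{M}_t}(\lambda)$ for one step of DP-SGD, which applies the Gaussian mechanism of Theorem 2 to a Poisson-subsampled minibatch of rate $q$. After reducing by rotational symmetry to the one-dimensional case, the question becomes bounding the Rényi divergence between $\mu_0 = \mathcal{N}(0,\sigma^2)$ and the mixture $\mu_1 = (1-q)\mathcal{N}(0,\sigma^2) + q\mathcal{N}(1,\sigma^2)$. Expanding $\mathbb{E}_{\mu_0}\bigl[(\mu_1/\mu_0)^\lambda\bigr]$ via the binomial theorem and performing a term-by-term Gaussian moment computation, one shows, for integer $\lambda$ with $\lambda \leq \sigma^2 \ln(1/(q\sigma))$ and $q$ small, that $\alpha_{\mathcal{M}_t}(\lambda) \leq q^2\lambda(\lambda+1)/\bigl((1-q)\sigma^2\bigr) + O(q^3\lambda^3/\sigma^3)$, so that $\alpha_{\mathcal{M}_t}(\lambda) \leq c_0\,q^2\lambda^2/\sigma^2$ for an absolute constant $c_0$. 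This bound extends to both directions $\mu_0 \Vert \mu_1$ and $\mu_1 \Vert \mu_0$.

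Combining, after $T$ steps composability gives $\alpha_{\mathcal{M}}(\lambda) \leq c_0 T q^2 \lambda^2/\sigma^2$. Applying the tail bound and optimizing, I choose $\lambda = \varepsilon\sigma^2/(2c_0 q^2 T)$, which balances the two contributions and yields $\delta \leq \exp(-\varepsilon^2 \sigma^2/(4c_0 q^2 T))$. Solving $\delta \leq \exp(-\varepsilon^2\sigma^2/(4c_0 q^2 T))$ for $\sigma$ gives $\sigma \geq c_2 q\sqrt{T\log(1/\delta)}/\varepsilon$ for $c_2 = 2\sqrt{c_0}$, and the admissibility condition $\lambda \leq \sigma^2\ln(1/(q\sigma))$ reduces to the hypothesis $\varepsilon < c_1 q^2 T$ for a suitable $c_1$. \textbf{The main obstacle} is the per-step moment bound: the binomial expansion of $\mathbb{E}_{\mu_0}[(\mu_1/\mu_0)^\lambda]$ generates terms $\binom{\lambda}{k} q^k \mathbb{E}[\exp(k z/\sigma^2 - k/(2\sigma^2))]$ whose naive bound blows up as $\lambda$ grows, and showing that only the $k=0,1,2$ terms contribute at leading order — so that $\alpha_{\mathcal{M}_t}(\lambda)$ is $O(q^2\lambda^2/\sigma^2)$ rather than $O(q\lambda/\sigma^2)$ — is precisely what gives the $\sqrt{T}$ rather than $T$ dependence and is the delicate heart of the proof.
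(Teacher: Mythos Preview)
Your proposal is correct and follows essentially the same route as the paper. The paper does not give a self-contained proof of this theorem---it is quoted as a preliminary from Abadi et al.---but it lays out exactly the three ingredients you use (Lemma~\ref{lem:gaussian} for the per-step subsampled-Gaussian moment bound, Lemma~\ref{lem:ori_compose} for composability, Lemma~\ref{lem:tail} for the tail bound), and its proof of Theorem~\ref{thm:main} specializes to this statement when $k=1$ and proceeds identically to your argument: bound each step by $O(q^2\lambda^2/\sigma^2)$, sum over $T$ steps, then solve the two inequalities $\alpha_{\mathcal{M}}(\lambda)\leq \lambda\varepsilon/2$ and $\exp(-\lambda\varepsilon/2)\leq\delta$ together with the constraint $\lambda\leq\sigma^2\ln(1/(q\sigma))$.
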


The proof of moments accountant relies on the definition of privacy loss and $\lambda$-th moments. Auxiliary input (e.g. hyperparameters) in the original definition is omitted for simplicity.

\begin{definition}\label{def:privacy_loss} (Privacy Loss)
For two neighboring datasets $D$ and $D'$, an output $o$ and a mechanism $\mathcal{M}$, the privacy loss is defined as
$$ c(o;\mathcal{M},D,D')\triangleq \ln\frac{\Pr[\mathcal{M}(D)=o]}{\Pr[\mathcal{M}(D')=o]} $$
\end{definition}

\begin{definition}\label{def:moments} ($\lambda$-th moments)
The $\lambda$-th moments is defined as the log of the moments generating function of privacy loss evaluated at value $\lambda$:
$$ \alpha_\mathcal{M}(\lambda;D,D')\triangleq \ln \mathbb{E}_{o\sim\mathcal{M}(D)}[\exp{\lambda c(o;D,D')}] $$
Furthermore, the maximum of all possible $\lambda$-th moments is defined as
$$ \alpha_\mathcal{M}(\lambda)\triangleq\max_{D,D'} \alpha_\mathcal{M}(\lambda;D,D') $$
\end{definition}

In \cite{abadi2016deep}, the authors first prove the moments bound $\alpha_\mathcal{M}(\lambda)$ for a single Gaussian mechanism (Lemma~\ref{lem:gaussian}). Secondly, they prove how the bound accumulates in a sequence of Gaussian mechanisms (Lemma~\ref{lem:ori_compose}). Finally, the relationship between this moments bound and differential privacy parameters are proved according to Lemma~\ref{lem:tail}.

\begin{lemma}\label{lem:gaussian}
Suppose that $f:D\rightarrow\mathbb{R}^p$ with $\|f(\cdot)\|_2\leq 1$. Suppose $\sigma\geq 1$ and $J$ is a sample from $[n]$ where each $i\in[n]$ is chosen independently with probability $q\leq \frac{1}{16\sigma}$. Then, for any positive integer $\lambda\leq \sigma^2\ln \frac{1}{q\sigma}$, the Gaussian mechanism $\mathcal{M}(d)=\sum_{i\in J}f(d_i)+\mathcal{N}(0,\sigma^2\textbf{I})$ satisfies
$$ \alpha_\mathcal{M}(\lambda)\leq \frac{q^2\lambda(\lambda+1)}{(1-q)\sigma^2}+O(q^3\lambda^3/\sigma^3) $$
\end{lemma}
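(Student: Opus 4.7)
The plan is to reduce the bound to a one-dimensional Gaussian mixture calculation and expand the resulting likelihood ratio in powers of $q$. Fix two neighboring datasets $D$ and $D'$; by the definition of the moments, it suffices to upper bound both $\ln \mathbb{E}_{o\sim\mathcal{M}(D)}[(\mathcal{M}(D)(o)/\mathcal{M}(D')(o))^\lambda]$ and the symmetric quantity. Since the Gaussian noise is isotropic and $\|f(\cdot)\|_2\leq 1$, I would rotate coordinates so that the differing record $d$ is mapped to $e_1$; the coordinates orthogonal to $e_1$ contribute identical factors to both densities and cancel in the likelihood ratio. This reduces the problem to comparing $\mu_0 = \mathcal{N}(0,\sigma^2)$ and the mixture $\mu_1 = (1-q)\mathcal{N}(0,\sigma^2) + q\mathcal{N}(1,\sigma^2)$ on $\mathbb{R}$.

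Next, I would expand the likelihood ratio by writing $\mu_1/\mu_0 = 1 + q(\nu/\mu_0 - 1)$ with $\nu = \mathcal{N}(1,\sigma^2)$, and apply the binomial theorem:
$$ \mathbb{E}_{o\sim\mu_1}\!\left[\left(\frac{\mu_1(o)}{\mu_0(o)}\right)^{\!\lambda}\right] = \sum_{k=0}^{\lambda} \binom{\lambda}{k} q^k\, \mathbb{E}_{o\sim\mu_1}\!\left[\left(\frac{\nu(o)}{\mu_0(o)} - 1\right)^{\!k}\right]. $$
Using $\nu(o)/\mu_0(o) = \exp((2o-1)/(2\sigma^2))$, each inner expectation reduces to a Gaussian moment generating function of the form $e^{c/\sigma^2}$ that can be evaluated in closed form. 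The $k=0$ term is $1$, the $k=1$ term vanishes after simplification, and the $k=2$ term produces the leading contribution of order $q^2\lambda(\lambda+1)/((1-q)\sigma^2)$ claimed in the lemma.

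The main obstacle is controlling the $k\geq 3$ tail of the binomial sum uniformly over the stated range of $\lambda$. Here the constraints must be used delicately: the bound $q\leq 1/(16\sigma)$ forces the geometric factors $q^k e^{k^2/\sigma^2}$ to shrink, while the ceiling $\lambda\leq \sigma^2\ln(1/(q\sigma))$ prevents the exponential factors inside the Gaussian moments from exploding when we swap the order of summation. A careful term-by-term comparison, dominating $\binom{\lambda}{k}$ by $\lambda^k/k!$ and pulling out common factors, shows that $\sum_{k\geq 3}$ is $O(q^3\lambda^3/\sigma^3)$.

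Finally, the reverse direction $\mathbb{E}_{o\sim\mu_0}[(\mu_0(o)/\mu_1(o))^\lambda]$ is handled by the analogous expansion around $\mu_0$, again reducing to Gaussian integrals; symmetry between the two Gaussians in the mixture ensures the same asymptotic bound governs both orientations. Taking the logarithm and applying $\ln(1+x)\leq x$ converts the moment bound into the stated bound on $\alpha_\mathcal{M}(\lambda)$, after which one takes the maximum over all neighbouring pairs $D,D'$ to obtain $\alpha_\mathcal{M}(\lambda)$. I expect the bookkeeping in the tail bound to be the only genuinely delicate step; everything else is routine Gaussian calculation combined with the binomial expansion.
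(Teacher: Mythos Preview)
The paper does not give its own proof of this lemma. It is quoted verbatim from the moments-accountant paper of Abadi et al.~\cite{abadi2016deep} as a known building block (see the sentence ``the authors first prove the moments bound $\alpha_\mathcal{M}(\lambda)$ for a single Gaussian mechanism (Lemma~\ref{lem:gaussian})'' in the preliminaries), and is then invoked without argument inside the proof of Theorem~\ref{thm:main}. So there is no in-paper proof to compare against.

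That said, your outline is essentially the argument Abadi et al.\ use in their appendix: rotate to reduce to a one-dimensional comparison of $\mu_0=\mathcal{N}(0,\sigma^2)$ against the mixture $\mu_1=(1-q)\mu_0+q\mathcal{N}(1,\sigma^2)$, expand $(\mu_1/\mu_0)^\lambda$ via the binomial theorem in the variable $q(\nu/\mu_0-1)$, evaluate the low-order terms as Gaussian MGFs, and use the hypotheses $q\le 1/(16\sigma)$ and $\lambda\le\sigma^2\ln(1/(q\sigma))$ to dominate the $k\ge3$ tail by $O(q^3\lambda^3/\sigma^3)$. One small correction: the $k=1$ term does not literally vanish when the outer expectation is taken under $\mu_1$; rather $\mathbb{E}_{\mu_1}[\nu/\mu_0-1]=q(e^{1/\sigma^2}-1)$, which is itself $O(q/\sigma^2)$ and so the $k=1$ contribution is already $O(q^2\lambda/\sigma^2)$ and gets absorbed into the leading term. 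With that adjustment your plan matches the original source; the delicate bookkeeping you flag in the tail bound is exactly where Abadi et al.\ spend their effort as well.
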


\begin{lemma}\label{lem:ori_compose}
Suppose $\mathcal{M}$ contains a sequence of $T$ adaptive mechanisms $\mathcal{M}_1,...,\mathcal{M}_T$, where $\mathcal{M}_i:\prod_{j=1}^{i-1}\mathcal{R}_j\times\mathcal{D}\rightarrow\mathcal{R}_i$. For $\forall\;\lambda>0$, we have
$$ \alpha_\mathcal{M}(\lambda)\leq \sum_{i=1}^T\alpha_{\mathcal{M}_i}(\lambda) $$
\end{lemma}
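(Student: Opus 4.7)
The plan is to reduce the composition bound to a telescoping application of the definition of $\lambda$-th moments by exploiting the chain rule for the joint distribution of outputs $(o_1,\ldots,o_T)$ and then peeling off expectations one mechanism at a time. Since the mechanisms are adaptive, I must treat the past transcript as auxiliary input to each $\mathcal{M}_i$, and the key conceptual point is that the bound $\alpha_{\mathcal{M}_i}(\lambda)$ already takes the worst case over all such auxiliary inputs (as the excerpt notes, auxiliary inputs were suppressed in the notation for simplicity).

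First I would fix neighboring datasets $D, D'$ and expand the joint density of the output tuple via the chain rule,
$$\Pr[\mathcal{M}(D)=o] \;=\; \prod_{i=1}^{T}\Pr[\mathcal{M}_i(o_1,\ldots,o_{i-1};D)=o_i],$$
and symmetrically for $D'$. Taking logarithms and differencing, the privacy loss of the composition decomposes into a sum of conditional privacy losses,
$$c(o;\mathcal{M},D,D') \;=\; \sum_{i=1}^{T} c\bigl(o_i;\mathcal{M}_i(o_1,\ldots,o_{i-1};\cdot),D,D'\bigr).$$
Exponentiating with $\lambda$ converts this sum into a product inside the moment generating function.

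Next I would apply the tower property of conditional expectation to strip off the last factor:
$$\mathbb{E}_{o\sim\mathcal{M}(D)}\bigl[e^{\lambda c(o)}\bigr] \;=\; \mathbb{E}_{o_{<T}}\!\left[\Bigl(\prod_{i<T}e^{\lambda c_i}\Bigr)\,\mathbb{E}_{o_T\mid o_{<T}}\bigl[e^{\lambda c_T}\bigr]\right].$$
By Definition~\ref{def:moments}, interpreted with auxiliary inputs included in the maximum, the inner conditional expectation is at most $e^{\alpha_{\mathcal{M}_T}(\lambda)}$ uniformly in the value of $o_{<T}$. I would then pull this constant outside the outer expectation and iterate the same argument on $\mathcal{M}_{T-1},\ldots,\mathcal{M}_1$, obtaining $\mathbb{E}[e^{\lambda c(o)}] \leq \prod_{i=1}^{T} e^{\alpha_{\mathcal{M}_i}(\lambda)}$. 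Taking logarithms and then the supremum over neighboring $D, D'$ yields the claim.

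The main obstacle is purely notational rather than analytic: the statement of $\alpha_{\mathcal{M}_i}(\lambda)$ in the excerpt does not explicitly track auxiliary inputs, so the one delicate step is to justify that the maximum in Definition~\ref{def:moments} should be taken over all triples $(D,D',\text{aux})$ and therefore dominates the conditional moment regardless of what transcript $o_{<i}$ has been realized. Once this convention is fixed, the remainder is a straightforward induction on $T$ and no new estimates are needed.
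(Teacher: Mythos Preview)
Your proof is correct and is essentially the standard argument from Abadi et al.\ (the moments accountant paper): decompose the privacy loss additively via the chain rule, use the tower property to peel off one mechanism at a time, and bound each inner conditional moment uniformly by $e^{\alpha_{\mathcal{M}_i}(\lambda)}$ using the worst-case-over-auxiliary-input convention. You have also correctly flagged the one genuinely subtle point, namely that the supremum defining $\alpha_{\mathcal{M}_i}(\lambda)$ must range over auxiliary inputs so that the bound holds for every realized transcript $o_{<i}$.

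Note, however, that the present paper does \emph{not} supply its own proof of this lemma: Lemma~\ref{lem:ori_compose} is stated in the preliminaries as a result quoted from \cite{abadi2016deep}, and the paper only proves its extension, Lemma~\ref{lem:compose}, by reducing the multi-party setting back to a single long adaptive sequence and then invoking Lemma~\ref{lem:ori_compose} as a black box. So there is no paper-specific proof to compare against; what you have written is the proof the paper defers to the original source, and it is sound.
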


\begin{lemma}\label{lem:tail}
For any $\varepsilon>0$, the mechanism $\mathcal{M}$ is $(\varepsilon,\delta)$-differential private if
$$ \delta=\min_\lambda\exp(\alpha_\mathcal{M}(\lambda)-\lambda\varepsilon) $$
\end{lemma}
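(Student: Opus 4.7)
The plan is to prove this tail bound in two stages: first, use Markov's inequality applied to the moment generating function of the privacy loss to upper bound the probability that the loss exceeds $\varepsilon$; second, partition the output space into a ``good'' region where the pointwise privacy loss is at most $\varepsilon$ and a ``bad'' region whose measure we have just controlled, and verify that the two bounds combine to give $(\varepsilon,\delta)$-differential privacy.

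More concretely, fix neighboring datasets $D,D'$ and a positive $\lambda$, and let $B = \{o \in \mathcal{R} : c(o;\mathcal{M},D,D') \geq \varepsilon\}$. By Markov's inequality applied to the nonnegative random variable $\exp(\lambda c(o;\mathcal{M},D,D'))$ under $o \sim \mathcal{M}(D)$,
$$ \Pr_{o\sim\mathcal{M}(D)}[o \in B] = \Pr\bigl[e^{\lambda c(o)} \geq e^{\lambda\varepsilon}\bigr] \leq \frac{\mathbb{E}_{o\sim\mathcal{M}(D)}[e^{\lambda c(o)}]}{e^{\lambda\varepsilon}} = \exp\bigl(\alpha_\mathcal{M}(\lambda;D,D') - \lambda\varepsilon\bigr). $$
Taking the supremum over neighboring $D,D'$ on the right replaces $\alpha_\mathcal{M}(\lambda;D,D')$ with $\alpha_\mathcal{M}(\lambda)$, and since the inequality holds for every positive $\lambda$ we may take the minimum, yielding $\Pr_{o\sim\mathcal{M}(D)}[o\in B] \leq \delta$ for the $\delta$ defined in the statement.

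To conclude $(\varepsilon,\delta)$-differential privacy, fix any measurable $S\subseteq\mathcal{R}$ and decompose $S = (S\cap B)\cup(S\setminus B)$. On $S\setminus B$ the pointwise privacy loss satisfies $c(o;\mathcal{M},D,D') < \varepsilon$, which by Definition~\ref{def:privacy_loss} is exactly the pointwise ratio bound $\Pr[\mathcal{M}(D)=o] \leq e^\varepsilon \Pr[\mathcal{M}(D')=o]$; integrating (or summing) over $S\setminus B$ gives $\Pr[\mathcal{M}(D)\in S\setminus B] \leq e^\varepsilon \Pr[\mathcal{M}(D')\in S\setminus B]$. On $S\cap B$ we simply use the tail bound from the previous paragraph: $\Pr[\mathcal{M}(D)\in S\cap B] \leq \Pr[\mathcal{M}(D)\in B] \leq \delta$. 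Adding the two estimates and using $\Pr[\mathcal{M}(D')\in S\setminus B] \leq \Pr[\mathcal{M}(D')\in S]$ yields the required inequality $\Pr[\mathcal{M}(D)\in S] \leq e^\varepsilon \Pr[\mathcal{M}(D')\in S] + \delta$.

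The only real subtlety, and the main obstacle I anticipate, is the measure-theoretic bookkeeping in the split: one has to make sure the set $B$ is measurable (it is, since $c(\cdot;\mathcal{M},D,D')$ is defined via a ratio of densities) and to handle outputs where $\Pr[\mathcal{M}(D')=o]=0$, on which the log-ratio in Definition~\ref{def:privacy_loss} is interpreted as $+\infty$ and such outputs are automatically placed in $B$. Once this is cleanly set up, the pointwise bound on $S\setminus B$ and the Chernoff-style tail bound on $B$ combine mechanically, and the minimum over $\lambda$ in the statement is simply the optimization that makes the Markov bound as tight as possible.
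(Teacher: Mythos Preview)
Your argument is correct and is exactly the standard Chernoff--Markov proof of this tail bound. Note, however, that the paper does not supply its own proof of Lemma~\ref{lem:tail}: it is stated in the preliminaries as a known result from the moments-accountant paper~\cite{abadi2016deep}, so there is nothing further to compare against here.
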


\subsection{Unsupervised Learning}
Unsupervised learning trains a model without labels. With this model, each sample can be tagged as a feature map, named the \textit{representation} of this sample.

FedOnce does not rely on specific unsupervised learning algorithms. In this paper, we choose NAT~\cite{bojanowski2017unsupervised}, which is a simple and effective unsupervised learning algorithm suitable for both image datasets and multivariate datasets. In NAT, in order to learn a representation $R:n\times d$ of $n$ samples with $d$ dimensions, a random representation $C:n\times d$ is generated. Denoting the neural network model to generate $R$ as $f_\theta(\cdot)$, the goal of NAT can be expressed as the Equation~\ref{eq:natloss}.

\begin{equation}\label{eq:natloss}
    \min_{P,\theta}{\frac{1}{2n}\left\| f_{\theta}(D)-PC\right\|_F^2} 
\end{equation}
where $P$ is a permutation matrix, D is the dataset without labels, and $PC$ (i.e., the matrix multiplication of $P$ and $C$) is a permutation of $C$. 

In each iteration of training, $\theta$ is then optimized by gradient descent. For every $t_{P}$ iterations, $P$ is optimized by Hungarian algorithm~\cite{kuhn1955hungarian}. The updating frequency $t_{P}$ can be adjusted as a hyperparameter. After the training process, the representation can be obtained by predicting with model $\theta$, i.e., $R=f_\theta(D)$.

\subsection{SplitNN}
SplitNN \cite{vepakomma2018split} is a simple and effective vertical federated learning\footnote{Though some studies categorize SplitNN as split learning, we denote SplitNN as a vertical federated learning algorithm for simplicity.} model based on neural network. As explained in Fig.~\ref{fig:splitnn}, models are split into multiple parties. Each guest party holds a local model; the host party holds a local model and an aggregation model. In forward propagation, first, all the parties forward propagate independently using their local models. Then, the outputs of local models on guest parties are sent to the host party. The host party concatenates the outputs from all the parties (including itself) and feeds the concatenated output into the aggregation model $\theta_{agg}$. The aggregation model continues forward propagating to produce the final prediction. In the backpropagation, first, the loss and gradients are calculated based on the final prediction. Then, the gradients w.r.t. to $\theta_1, \theta_2, \theta_3$ are calculated by backpropagating through $\theta_{agg}$. The gradients w.r.t. $\theta_2, \theta_3$ are sent to two guest parties, respectively. Finally, all the parties perform backpropagation on their local models. In each iteration of training, both forward propagation and backpropagation are performed for one time. Therefore, the training incurs two communication rounds between the host party and each guest party every iteration, which is impractical when parties are connected by unstable and expensive networks. This pitfall of SplitNN motivates our design of a one-shot vertical federated learning framework. 
 
\begin{figure}
    \centering
    \includegraphics[width=0.9\linewidth]{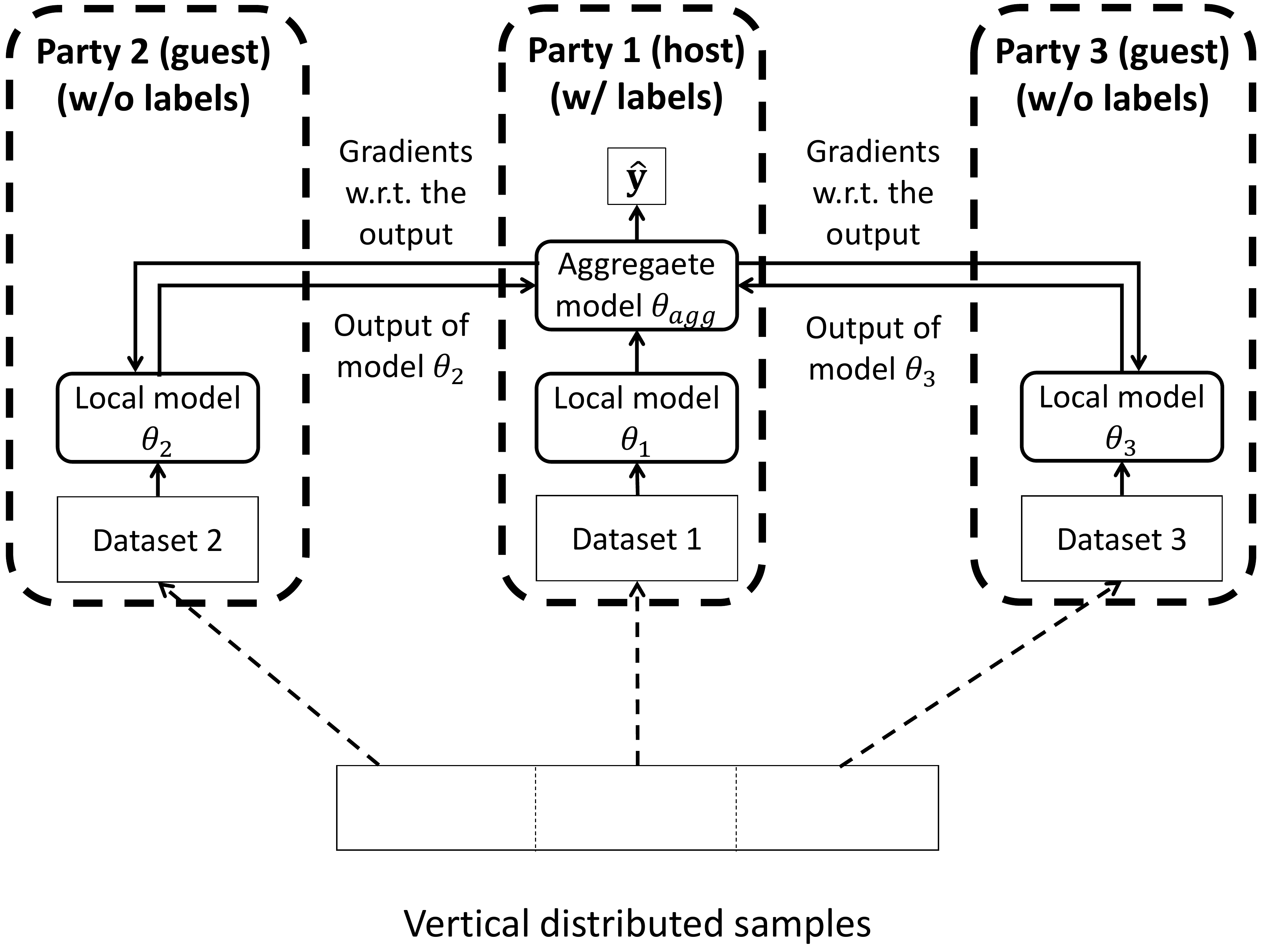}
    \caption{Structure of SplitNN}
    \label{fig:splitnn}
\end{figure}

\section{The FedOnce Framework}\label{sec:framework}
In this section, we propose the framework of FedOnce. FedOnce with no protection on released models is denoted as FedOnce-L0, corresponding to the scenario where the trained model is used inside the company, thus free from privacy attacks. FedOnce ensuring differential privacy of the released model is denoted as FedOnce-L1, corresponding to the scenario where the trained model is released to the public (either as white-box or black-box), thus threatened by the membership inference attack~\cite{Shokri2017MembershipIA}. Our objective is to solve the problem stated in section~\ref{subsec:problem} by one-shot communication. The framework is presented in Figure~\ref{fig:framework}.

\subsection{Problem Formulation}\label{subsec:problem}
\noindent\textbf{FedOnce-L0} Given a global dataset $\{\mathbf{x}_i, y_i\}_{i=1}^n$, the features of each sample $\mathbf{x}_i$ is distributed among $k$ parties (i.e., $\mathbf{x}_i = [\mathbf{x}_i^1, \mathbf{x}_i^2, ..., \mathbf{x}_i^k]$) while only one party has the labels $\mathbf{y}=\{y_i\}_{i=1}^n$. Without loss of generality, we assume party 1 is the host party and the other parties are guest parties. We denote the local dataset on party $j$ as $\mathbf{x}^j\triangleq \{\mathbf{x}_i^j\}_{i=1}^n$. All the parties need to collaboratively train a model $f(\theta;\mathbf{x}^1,\mathbf{x}^2,...,\mathbf{x}^k)$ without exchanging their feature vectors or labels with each other. \textbf{One communication round with the host party is permitted for each guest party.} The training process can be formulated as
$$ \min_\theta \frac{1}{n}\sum_{i=1}^{n}L(f(\theta;\mathbf{x}^1_i,\mathbf{x}^2_i,...,\mathbf{x}^k_i);y_i)+\lambda\omega(\theta) $$
where $L(f(\theta;\mathbf{x}^1_i,\mathbf{x}^2_i,...,\mathbf{x}^k_i);y_i)$ is the loss function and $\lambda\omega(\theta)$ is the regularization term.

Note that in the setting of FedOnce-L0, we assume that the samples have already been aligned among the parties (i.e., party $j$ knows $\mathbf{x}_i^j$ contains partial features of $\mathbf{x}_i$). Such alignment can be implemented by privacy-preserving record linkage \cite{vidanage2019efficient}, which is an orthogonal topic to this paper. This is a common setting in vertical federated learning \cite{hu2019fdml,cheng2019secureboost,fu2021vf2boost,wu2020privacy}. 

\noindent\textbf{FedOnce-L1} In FedOnce-L1, we focus on the privacy risk of model deployment after the training. We assume the host party is trusted by guest parties. The attackers can obtain host and guest models including all the parameters required for inference. The goal of FedOnce-L1 is to protect the released models against membership inference attack \cite{nasr2019comprehensive} by ensuring ($\varepsilon,\delta$)-differential privacy of the released model. The details on privacy design are presented in Section~\ref{sec:privacy}.

\begin{figure}[htpb]
\centering
  \includegraphics[width=0.95\linewidth]{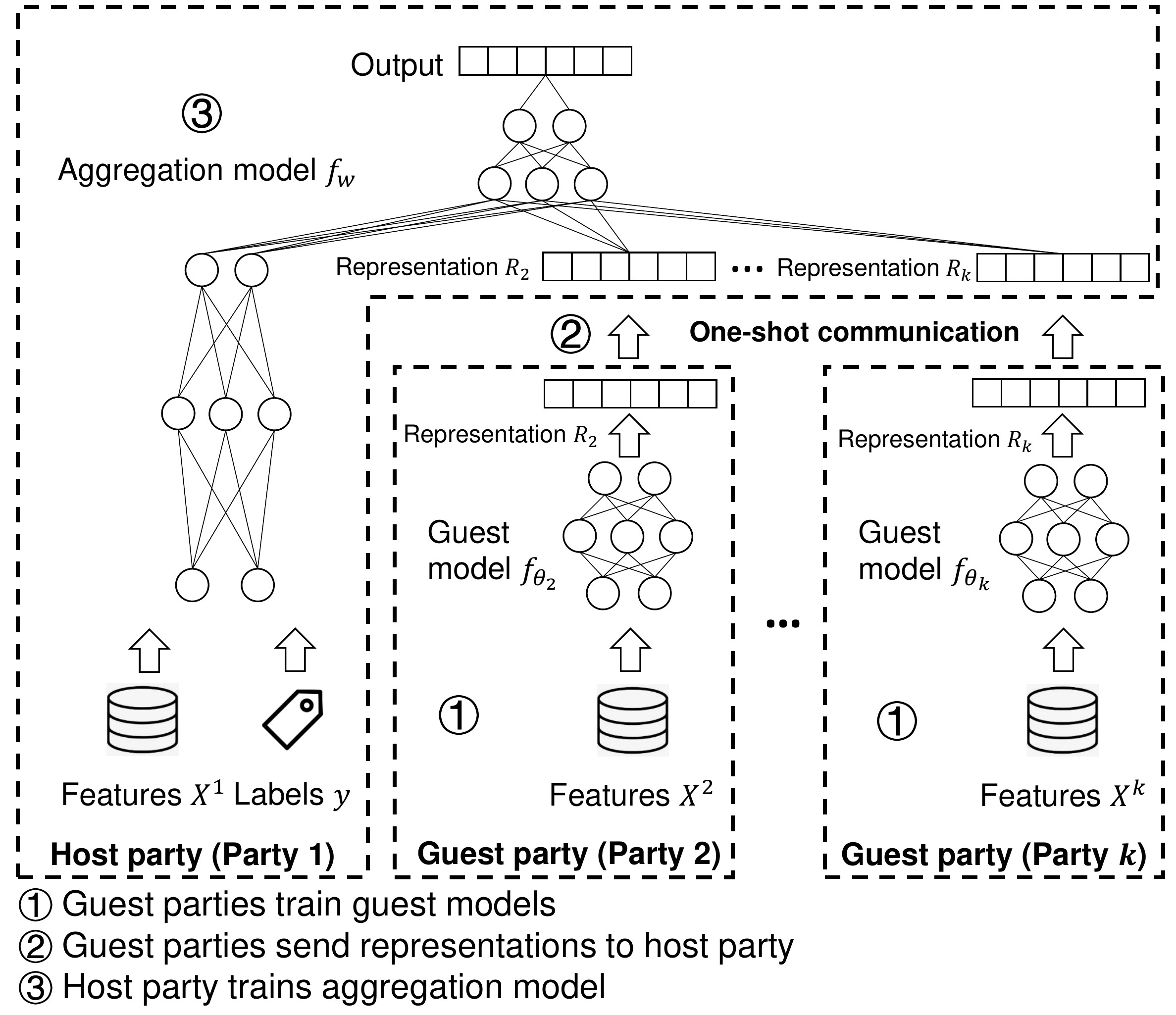}
  \caption{Framework of FedOnce}
  \label{fig:framework}
\end{figure}

\subsection{Training}
Suppose among $k$ parties $\mathcal{P}_1, \mathcal{P}_2, ..., \mathcal{P}_k$, $\mathcal{P}_1$ is the host party and $\mathcal{P}_2, ..., \mathcal{P}_k$ are guest parties. In the training process, guest parties first extract representations of their data by unsupervised learning based on \cite{bojanowski2017unsupervised}. The representations, reflecting the high-level features of data on guest parties, are then sent to the host party, which is the only communication required in FedOnce. Finally, the host party collects all the representations and trains an aggregation model together with its own data. The training process of FedOnce is summarized as Algorithm~\ref{alg:train}, where the regularization term is omitted for simplicity. For convenience, we denote $R_i,...,R_j$ as $R_{i:j}$. Although the training is performed by batch in practice, we present the training procedure by epoch in Algorithm~\ref{alg:train} for simplicity.

Specifically, guest parties first initialize random targets $C_j$ (line 3). In each epoch $t$, Hungarian algorithm~\cite{kuhn1955hungarian} is adopted to find the best permutation of $C_j$, denoted as $\tilde{P}^{(t)}$, that minimizes the loss (line 5). Because of the high computational complexity, the Hungarian algorithm finds the best permutation in each batch rather than in the whole dataset. Later, this permutation is used to calculate the gradient w.r.t. parameters $\theta_j^{(t)}$ (line 6-7). If the released model requires no protection, i.e., FedOnce-L0, $\theta_j$ is directly updated by the gradients (line 9). Otherwise, in FedOnce-L1, the gradients are clipped to bound the sensitivity (line 11). Gaussian noise is then generated and added to the gradients (line 12). Finally, $\theta_j^{(t)}$ is updated by these gradients with noise (line 13). After each guest party $\mathcal{P}_j(j\in[2,k])$ completes training, the calculated representation $R_j$ is sent to the host party (line 16). 

Once receiving the representations from all guest parties, the host party starts to train an aggregation model $f_w$, which takes representations $R_{2:j}$ and its own data $x_1$ as inputs and outputs the final prediction. First, gradients w.r.t. $w$ are calculated (lines 21-22). In FedOnce-L0, the gradients are directly utilized to update $w$ (line 24). Otherwise, in FedOnce-L1, gradients are clipped (line 26); Gaussian noise is generated and added to the gradients (lines 26-27). Finally, model parameters $w$ are updated by these gradients with noise (line 28).

\noindent\textbf{Extension to Multi-Round Training.} While achieving impressive performance in one communication round, FedOnce can be further improved if more communication rounds are permitted. At the end of the training process of FedOnce, each party holds a guest model and the host party additionally holds an aggregation model; the combination of these models can be regarded as a pre-trained model of SplitNN. Therefore, the training process of SplitNN can be directly applied to further boost the performance of FedOnce. Specifically, in each iteration, guest parties perform forward propagation and transfer intermediate outputs to the host party. The host party receives the outputs and continues forward propagation in the aggregation model to obtain predictions $\hat{\mathbf{y}}$. After calculating the loss with $\hat{\mathbf{y}}$ the labels $\mathbf{y}$, the host party performs backpropagation and transfers gradients to guest parties. Finally, the received gradients are used to perform backpropagation to update guest models. We denote this extension as \textit{multi-round FedOnce-L0}.

\algnewcommand{\LineComment}[1]{\State \(\triangleright\) #1}

\begin{algorithm}[t!]
\caption{Training Process of FedOnce}\label{alg:train}
\begin{algorithmic}[1]
\Require training data $\left(\mathbf{x}^{1},\mathbf{x}^{2},...,\mathbf{x}^{k}\right)$ distributed on $k$ parties; labels $\mathbf{y}$ on host party; number of epochs $T_j$ of each party $j$; number of samples $n$
\LineComment{Guest Parties}
\For{party $j\in[2,k]$} \Comment{Parallel}
\State \textbf{Initialize} guest model $f_{\theta_j}$, random targets $C_j$
\For{epoch $t\in[T_j]$}
\State $\tilde{P}^{(t)}\leftarrow arg\min_{P^{(t)}}\frac{1}{2n}\| f^{(t)}_{\theta_j}(\mathbf{x}^{j})-P^{(t)}C_j\|_F^2$
\State $\tilde{L}^{(t)}(\theta_j,\mathbf{x}^{j})\leftarrow \frac{1}{2n}\| f^{(t)}_{\theta_j}(\mathbf{x}^{j})-\tilde{P}^{(t)}C_j\|_F^2$
\State $\textbf{g}^{(t)}\leftarrow\nabla_{\theta_j}L^{(t)}(\theta_j,\mathbf{x}^{j}) $
\If{FedOnce-L0}
\State $\theta_j^{(t+1)}=\theta_j^{(t)}-\eta^{(t)}\textbf{g}^{(t)}$
\ElsIf{FedOnce-L1}
\State $\textbf{g}^{(t)}_{clip}=\textbf{g}^{(t)}/max(1,\frac{\|\textbf{g}^{(t)}\|_2}{\Omega})$
\State $\textbf{g}^{(t)}_{noise}=\textbf{g}^{(t)}_{clip}+\mathcal{N}(0,\sigma^2\Omega^2\textbf{I})$
\State $\theta_j^{(t+1)}=\theta_j^{(t)}-\eta^{(t)}\textbf{g}^{(t)}_{noise}$
\EndIf
\EndFor
\State Send representation $R_j=f_{\theta_j}(\mathbf{x}^{j})$ to host party
\EndFor
\LineComment{Host Party}
\State Collect $R_j (j\in[2,k])$ from all guest parties
\For{epoch $t\in[T_j]$}
\State $L^{(t)}(w,\mathbf{x}^{1},R_{2:j})\leftarrow \frac{1}{2n}\| f^{(t)}_{w}(\mathbf{x}^{1},R_{2:j})-\mathbf{y}\|_F^2$
\State $\textbf{g}^{(t)}\leftarrow\nabla_{w}L^{(t)}(w,\mathbf{x}^{1},R_{2:j}) $
\If{FedOnce-L0}
\State $w^{(t+1)}=w^{(t)}-\eta^{(t)}\textbf{g}^{(t)}$
\ElsIf{FedOnce-L1}
\State $\textbf{g}^{(t)}_{clip}=\textbf{g}^{(t)}/max(1,\frac{\|\textbf{g}^{(t)}\|_2}{\Omega})$
\State $\textbf{g}^{(t)}_{noise}=\textbf{g}^{(t)}_{clip}+\mathcal{N}(0,\sigma^2\Omega^2\textbf{I})$
\State $w^{(t+1)}=w^{(t)}-\eta^{(t)}\textbf{g}^{(t)}_{noise}$
\EndIf
\EndFor
\State \Return model parameters $\theta_j(j\in[2,k]), w$
\end{algorithmic}
\end{algorithm}

\subsection{Prediction}
The prediction is conducted by all the parties collaboratively according to Algorithm~\ref{alg:pred}. The features of testing data are distributed on $k$ parties, i.e., $\mathbf{x}=\left(\mathbf{x}^{1}, \mathbf{x}^{2}, ..., \mathbf{x}^{k}\right)$. First, each guest party $\mathcal{P}_j$ makes a prediction according to its guest model $f_{\theta_j}$ (line 3). These predicted representations $R_j$ are then sent to the host party (line 4) which makes the final prediction based on $f_{\theta_j}(\mathbf{x}_i^{j})$ and partial sample $\mathbf{x}_i^{1}$ (line 7).

\begin{algorithm}
\caption{Predicting Process of FedOnce}\label{alg:pred}
\begin{algorithmic}[1]
\Require testing data $\left(\mathbf{x}^{1},\mathbf{x}^{2},...,\mathbf{x}^{k}\right)$ distributed on $k$ parties; trained models $f_{\theta_j}(\cdot)\;(j\in[1,k]),f_w$;
\LineComment{Guest parties}
\For{party $j\in[2,k]$}
\State $R_j=f_{\theta_j}(\mathbf{x}^{j})$
\State Send $R_j$ to host party
\EndFor
\LineComment{Host party}
\State $\hat{\mathbf{y}} = f_w(\mathbf{x}^{1},R_{2:j})$
\State \Return prediction $\hat{\mathbf{y}}$
\end{algorithmic}
\end{algorithm}

\section{Privacy}\label{sec:privacy}
\subsection{Privacy of Released Model}

FedOnce-L1 adopts differential privacy to ensure each sample cannot be distinguished through the released model. The noise addition method largely follows \cite{abadi2016deep}. As demonstrated in Algorithm~\ref{alg:train}, gradients are first clipped by a threshold $\Omega$ in each iteration of the training process to ensure that $\ell_2$-sensitivity of all the gradients are bounded by $\Omega$, i.e., $\|\textbf{g}_{clip}\|_2\leq\Omega$. After gradient clipping, independent Gaussian noise is added to each gradient.

\wzm{Compared to the original moments accountant, where the privacy loss is only accumulated across iterations, the privacy loss in differentially private federated learning also accumulates across parties.} Regarding the division of overall privacy budget $\varepsilon$ to each party, existing studies~\cite{lou2018uplink,yao2019privacy} on differentially private vertical federated learning simply divide $\varepsilon$ equally to each party. This method, denoted as \textit{simple division}, satisfies differential privacy according to simple composition (Theorem~\ref{thm:simple}), which is over-conservative and leads to a small privacy budget for each party. Hence, we propose \textit{moments division}, which is a natural extension of moments accountant. Since inner-party privacy loss is accumulated according to moments accountant, the Rényi divergence~\cite{mironov2017renyi} for each party is calculated along with the privacy loss. Instead of accumulating $\varepsilon$, we accumulate the Rényi divergences of parties and calculate the overall $\varepsilon$ given $\delta$. Our main result is summarized as Theorem~\ref{thm:main}.

\begin{theorem}\label{thm:main}
Given the sampling probability of SGD $q$, the number of epochs $T_j$ of each party $\mathcal{P}_j$, and the number of parties $k$, there exists constants $c_1$ and $c_2$ so that for $\forall\varepsilon<c_1q^2\sum_{j=1}^k T_j,\;\forall\delta>0$, FedOnce-L1 is $(\varepsilon,\delta)$-differential privacy if we choose
$$ \sigma\geq c_2\frac{q\sqrt{log(1/\delta)\sum_{j=1}^k{T_j}}}{\varepsilon} $$
\end{theorem}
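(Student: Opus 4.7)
The strategy is to reduce Theorem~\ref{thm:main} to the moments accountant framework of Lemmas~\ref{lem:gaussian}--\ref{lem:tail} by viewing the entire FedOnce-L1 training procedure as a single adaptive composition of $\sum_{j=1}^k T_j$ Gaussian mechanisms, and then optimising $\lambda$ exactly as in the proof of Theorem~\ref{thm:ma}. The only new ingredient relative to the single-party case is that composition must be taken across parties as well as across iterations; the point of \emph{moments division} is to perform this accumulation at the level of R\'enyi divergences rather than at the level of $(\varepsilon,\delta)$, thereby avoiding the looseness of simple composition (Theorem~\ref{thm:simple}).

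\textbf{Per-step bound and composition.} Fix neighbouring global datasets $D,D'$ differing in a single sample. Because the partitioning is vertical, this sample contributes one row to each of the $k$ local views simultaneously. At iteration $t$ of party $j$ the released update is a clipped-gradient sum perturbed by Gaussian noise of standard deviation $\sigma\Omega$; after rescaling by $\Omega$ this is precisely the mechanism of Lemma~\ref{lem:gaussian}, so the per-step moment satisfies
\begin{equation*}
\alpha_{\mathcal{M}_{j,t}}(\lambda)\;\leq\;\frac{q^2\lambda(\lambda+1)}{(1-q)\sigma^2}+O\!\left(q^3\lambda^3/\sigma^3\right)
\end{equation*}
for every positive integer $\lambda\leq\sigma^2\ln(1/(q\sigma))$. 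I then enumerate the $\sum_{j=1}^k T_j$ releases in a fixed order (say, all iterations of party $2$, then of party $3$, $\ldots$, ending with the host aggregation model); even though guests train in parallel in practice, the released artefacts can be exposed sequentially, and the host mechanism is genuinely adaptive in the guest representations. Lemma~\ref{lem:ori_compose} therefore applies and yields
\begin{equation*}
\alpha_{\mathcal{M}}(\lambda)\;\leq\;\sum_{j=1}^k T_j\!\left[\frac{q^2\lambda(\lambda+1)}{(1-q)\sigma^2}+O(q^3\lambda^3/\sigma^3)\right].
\end{equation*}

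\textbf{Conversion and main obstacle.} With this aggregate moment bound, I invoke Lemma~\ref{lem:tail} and optimise $\lambda$ following the derivation of Theorem~\ref{thm:ma}: the dominant term is minimised by $\lambda=\Theta\!\bigl(\varepsilon\sigma^2/(q^2\sum_j T_j)\bigr)$, and enforcing both a target $\delta$ and the admissibility constraint $\lambda\leq\sigma^2\ln(1/(q\sigma))$ gives exactly the range condition $\varepsilon<c_1 q^2\sum_j T_j$ and the noise-scale requirement $\sigma\geq c_2\, q\sqrt{\log(1/\delta)\sum_j T_j}/\varepsilon$, with $c_1,c_2$ absorbing lower-order terms. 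I expect the main obstacle to be the justification of inter-party adaptive composition: one must be careful about what the adversary observes (the full collection of released models together with any intermediate messages they depend on) and about the neighbouring-dataset notion (a single swapped record perturbs every party's local view, not just one). Once this framing is nailed down the moments simply add, and the remainder of the argument is a direct transcription of the single-party analysis in \cite{abadi2016deep}.
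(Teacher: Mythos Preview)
Your proposal is correct and follows essentially the same route as the paper: bound each clipped-gradient Gaussian step via Lemma~\ref{lem:gaussian}, treat the full collection of $\sum_j T_j$ releases across all parties as one long adaptive sequence so that Lemma~\ref{lem:ori_compose} sums the moments, and then convert with Lemma~\ref{lem:tail} exactly as in the single-party moments-accountant argument. The only cosmetic difference is that the paper packages the inter-party composition step as a standalone lemma (Lemma~\ref{lem:compose}) before invoking it, whereas you argue it inline by fixing a sequential exposure order; your explicit remark that a single swapped record perturbs every party's local view is precisely the observation underlying that lemma.
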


The proof of Theorem~\ref{thm:main} is provided as follows. The major difference between our proof of FedOnce-L1 and moments accountant lies in the accumulation of moments, where moments accountant only accumulates across iterations, but FedOnce-L1 also accumulates across parties. Specifically, we extend Lemma~\ref{lem:ori_compose} to the setting of FedOnce-L1 as Lemma~\ref{lem:compose}. 

\begin{lemma}\label{lem:compose}
Suppose $\mathcal{M}$ contains $k$ sequences of Gaussian mechanisms $\mathcal{M}^1,...,\mathcal{M}^k$, each sequence $j$ contains $T$ Gaussian mechanisms $\mathcal{M}^j_1,...,\mathcal{M}^j_T$. For $j\in[1,k-1]$, $\mathcal{M}_i^j:\mathcal{D}\rightarrow\mathcal{R}_j$ takes original dataset as input (guest parties). For $j=k$, $\mathcal{M}_i^k:\prod_{j=1}^{k-1}\mathcal{R}_j\times\mathcal{D}\rightarrow\mathcal{R}_k$ take the orignal dataset and the output of other mechanisms as input (host party). For $\forall\;\lambda>0$, we have
$$ \alpha_\mathcal{M}(\lambda)\leq \sum_{j=1}^k\sum_{i=1}^T\alpha_{\mathcal{M}_i^j}(\lambda) $$
\end{lemma}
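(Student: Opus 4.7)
The plan is to reduce this statement to a direct application of Lemma~\ref{lem:ori_compose} by viewing the $kT$ mechanisms as a single adaptive sequence. The key observation is that the adaptive composition framework in Lemma~\ref{lem:ori_compose} allows each mechanism to depend on all prior outputs, so the guest-party mechanisms --- which only actually use $\mathcal{D}$ --- fit trivially as a degenerate special case, while the host-party mechanisms fit directly since by construction they take the guest outputs in $\prod_{j=1}^{k-1}\mathcal{R}_j$ as input.

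First, I would linearize the mechanisms in an order matching the FedOnce workflow: place all guest-party mechanisms $\mathcal{M}_1^1,\ldots,\mathcal{M}_T^{k-1}$ before the host-party mechanisms $\mathcal{M}_1^k,\ldots,\mathcal{M}_T^k$, yielding a single sequence $\tilde{\mathcal{M}}_1,\ldots,\tilde{\mathcal{M}}_{kT}$. Next, I would recast each $\tilde{\mathcal{M}}_\ell$ into the signature $\prod_{m<\ell}\mathcal{R}_m\times\mathcal{D}\rightarrow\mathcal{R}_\ell$ required by Lemma~\ref{lem:ori_compose}: the guest mechanisms simply discard the first coordinate, whereas the host mechanisms consume the guest outputs $\mathcal{R}_1,\ldots,\mathcal{R}_{k-1}$ (and implicitly prior host iterations) genuinely. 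Then I would invoke Lemma~\ref{lem:ori_compose} to conclude $\alpha_\mathcal{M}(\lambda)\leq\sum_{\ell=1}^{kT}\alpha_{\tilde{\mathcal{M}}_\ell}(\lambda)$ and re-index by party and iteration to recover $\sum_{j=1}^k\sum_{i=1}^T\alpha_{\mathcal{M}_i^j}(\lambda)$.

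The main obstacle I anticipate is a bookkeeping one: verifying that the worst-case moments bound over neighboring pairs $(D,D')$ transfers correctly through the tower-property decomposition used by Lemma~\ref{lem:ori_compose}. Concretely, the chain rule yields $c(o;\mathcal{M},D,D')=\sum_{j,i}c(o_i^j;\mathcal{M}_i^j,D,D'\mid\text{prior outputs})$, and applying the tower property iteratively from the last mechanism backward introduces a factor $\exp(\alpha_{\tilde{\mathcal{M}}_\ell}(\lambda))$ at each step --- but only if each conditional moment is bounded uniformly over the conditioning variables, which is precisely the role of the maximum over $(D,D')$ in Definition~\ref{def:moments}. Since this is exactly the argument already carried out for Lemma~\ref{lem:ori_compose}, no new technical machinery is needed; the extension to the cross-party setting is purely structural, and the degenerate dependence of guest mechanisms on prior outputs preserves rather than harms the bound.
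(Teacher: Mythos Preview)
Your proposal is correct and follows essentially the same approach as the paper: both arguments reduce the lemma to Lemma~\ref{lem:ori_compose} by wrapping the guest-party mechanisms so that they formally accept (and ignore) prior outputs, thereby fitting the adaptive-composition signature. The only cosmetic difference is that the paper applies Lemma~\ref{lem:ori_compose} in two stages---first within each party to bound $\alpha_{\mathcal{M}^j}(\lambda)$, then across parties---whereas you flatten all $kT$ mechanisms into a single adaptive sequence and invoke Lemma~\ref{lem:ori_compose} once; the key idea and the resulting bound are identical.
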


\begin{proof}
Intuitively, the sequences of mechanisms on all the parties can be regarded as a large sequence of adaptive mechanisms. Hence, the overall moments is the summation of the moments of each Gaussian mechanism by simply applying Lemma~\ref{lem:ori_compose}. Formally, we provide proof of Lemma~\ref{lem:compose} as follows.

From Lemma~\ref{lem:ori_compose}, in each sequence $j$, we have
\begin{equation}\label{eq:lem_1}
    \alpha_{\mathcal{M}^j}\leq \sum_{i=1}^T\alpha_{\mathcal{M}^j_i}(\lambda)
\end{equation}
For $j\in[1,k-1]$, i.e., guest parties, we define a mechanism ${\mathcal{\tilde{M}}_i^j}:\prod_{t=1}^{j-1}\mathcal{R}_t\times\mathcal{D}\rightarrow\mathcal{R}_j$ as follows. We use $o^{1:j}$ to denote $o^1,...,o^j$ for simplicity.
$$\mathcal{\tilde{M}}_i^j(D,o^{1:j})=\mathcal{M}_i^j(D) $$
This mechanism ${\mathcal{\tilde{M}}_i^j}$ satisfies the condition in Lemma~\ref{lem:ori_compose} while it has the same output as $\mathcal{M}_i^j$. For $j=k$, i.e., host party, $\mathcal{M}_i^k$ has already satisfied the condition in Lemma~\ref{lem:ori_compose}. Thus, by applying Lemma~\ref{lem:ori_compose} again on all the parties, we have
\begin{equation}\label{eq:lem_2}
  \alpha_{\mathcal{M}}(\lambda)\leq \sum_{j=1}^k\alpha_{\mathcal{M}^j}(\lambda)  
\end{equation}
Combine (\ref{eq:lem_1}) and (\ref{eq:lem_2}),
$$ \alpha_\mathcal{M}(\lambda)\leq \sum_{j=1}^k\sum_{i=1}^T\alpha_{\mathcal{M}_i^j}(\lambda) $$
\end{proof}
With Lemma~\ref{lem:compose}, the rest of the proof of Theorem~\ref{thm:main} largely follows the proof of moments accountant \cite{abadi2016deep} by replacing $T$ with $kT$. 

\begin{proof}
From Lemma~\ref{lem:gaussian}, since $\lambda\geq 1,q<1$, we have
\[
    \begin{aligned}
        \alpha_{\mathcal{M}_i^j}(\lambda) & \leq \frac{q^2\lambda(\lambda+1)}{(1-q)\sigma^2} \\
        & \leq \frac{q^2\lambda^2}{\sigma^2}\cdot\frac{\lambda+1}{\lambda(1-q)} \\
        & \leq \frac{q^2\lambda^2}{\sigma^2}
    \end{aligned}
\]
From Lemma~\ref{lem:compose}, the overall maximum $\lambda$-th moments can be calculated by a summation.
\begin{equation}\label{eq:moments_all}
    \alpha_{\mathcal{M}}(\lambda)\leq \sum_{j=1}^k\sum_{i=1}^{T_j}\alpha_{\mathcal{M}_i^j}(\lambda) \leq \frac{q^2\lambda^2\sum_{j=1}^k{T_j}}{\sigma^2}
\end{equation}
Given (\ref{eq:moments_all}), considering Lemma~\ref{lem:tail}, in order to ensure the $(\varepsilon,\delta)$-differential privacy of FedOnce-L1, it suffices to prove 
\begin{equation}\label{eq:final_1}
\frac{q^2\lambda^2\sum_{j=1}^k{T_j}}{\sigma^2}\leq \frac{\lambda\varepsilon}{2} 
\end{equation}
\begin{equation}\label{eq:final_2}
    \exp\left(-\frac{\lambda\varepsilon}{2}\right)\leq \delta
\end{equation}
From the condition of Lemma~\ref{lem:gaussian}, we also require
\begin{equation}\label{eq:final_3}
    \lambda\leq \sigma^2\ln\frac{1}{q\sigma}
\end{equation}
When $\varepsilon<c_1q^2\sum_{j=1}^k T_j$, it can be easily verified that there exist constants $c_1$ and $c_2$ that satisfy conditions (\ref{eq:final_1}), (\ref{eq:final_2}), (\ref{eq:final_3}) if we set 
$$ \sigma\geq c_2\frac{q\sqrt{log(1/\delta)\sum_{j=1}^k{T_j}}}{\varepsilon} $$

\end{proof}

With this theorem, the inter-party privacy loss accumulates much slower than simple division. In our experiments, moments division reduces the overall privacy loss by 88.9\% when $k=100$ compared to simple division.

\subsection{Privacy of Representations} 
In this study, assuming the host party is trusted among all the parties, we enable collaborative training by representation sharing to comply with GDPR \cite{voigt2017eu}. Nonetheless, in some applications, this setting may not be true and the representations are potentially threatened by privacy attacks. Such an attack could be similarly designed as the reconstruction attack~\cite{geiping2020inverting} in horizontal federated learning. This privacy risk indicates merely prohibiting unnecessary data sharing is not sufficient to guarantee privacy. More strict and clear regulations above GDPR are desired to measure the permitted information leakage.

To address this privacy issue, some techniques adopted to protect gradients such as local differential privacy (LDP) \cite{kairouz2014extremal}, homomorphic encryption (HE) \cite{cheng2019secureboost}, and secure multi-party computation (MPC) \cite{wu2020privacy,fu2021vf2boost} can potentially assist. Among these techniques, HE and MPC incur large computational cost, which is a heavy burden for devices with limited computational resources like smart Wi-Fi routers. LDP inserts noise into transferred gradients to guarantee the privacy of each party. Nevertheless, representations in FedOnce, in the format of matrices, cannot be similarly analyzed like gradients. Another approach is to regard the representations as released data and adopt privacy-preserving data release \cite{kotsogiannis2019privatesql} to add noise to the representations. As elaborated in Section~\ref{subsec:noise_repr}, this method is a possible solution, whereas the noise on representations should be strictly bounded to a small scale; otherwise, FedOnce could suffer considerable performance loss. In conclusion, more advanced methods are desired to protect the privacy representations in this new scenario where the host party cannot be trusted.

\section{Experiment}\label{sec:experiment}

\subsection{Experiment Setup}\label{subsec:exp_setup}
\noindent\textbf{Datasets.} We evaluate FedOnce on eight public datasets and two real-world federated datasets, whose details are summarized in Table~\ref{tab:dataset}. \textbf{Public datasets}: Public datasets are used to simulate the scenario where each party has the same number of features. \textit{gisette}, \textit{phishing}, and \textit{covtype} are binary classification datasets obtained from LIBSVM\footnote{\url{https://www.csie.ntu.edu.tw/~cjlin/libsvmtools/datasets}}. \textit{UJIIndoorLoc} and \textit{Superconduct} are regression datasets obtained from UCI\footnote{\url{https://archive.ics.uci.edu/ml/datasets.php}}. \textit{MNIST}, \textit{KMNIST} and \textit{Fashion-MNIST} are multi-class classification datasets obtained from TorchVision\footnote{\url{https://pytorch.org/docs/stable/torchvision/datasets.html}}. \textbf{Real-world federated datasets}:  1) \textit{NUS-WIDE}~\cite{nus-wide-civr09} is a real-world multi-view image datasets, used to simulate a security company training a model based on image data collected from different surveillance cameras. In \textit{NUS-WIDE}, each image is stored in the form of five types of low-level features. \textit{NUS-WIDE} provides 81 labels, among which we pick the most occurring label \texttt{sky} to conduct binary classification. 2) \textit{MovieLens}~\cite{harper2015movielens} is a real-world recommendation dataset, used to simulate a film production company training a recommendation model based on the data from a movie rating website and a movie streaming website. \textit{MovieLens} contains 9992 one-hot identity features and 133 auxiliary features. We set movie ratings as labels and regard the task as regression.

\begin{table}[htpb]
\centering
 \caption{Detailed information for datasets}\label{tab:dataset}
    \begin{subtable}{\linewidth}
    \centering
    \caption{Public datasets}\label{tab:pub_dataset}
    \begin{tabular}{cccc}
        \toprule
        \textbf{Dataset} & \textbf{\#samples} & \textbf{\#features} & \textbf{Task}  \\
        \midrule
        gisette & 6,000 & 5,000 & binary-classification \\
        covtype & 581,012 & 54 & binary-classification \\
        phishing & 11,055 & 68 & binary-classification \\
        UJIIndoorLoc & 21,048 & 529 & regression \\
        Superconduct & 21,263 & 81 & regression \\
        MNIST & 60,000 & 28$\times$28 & multi-classification \\
        KMNIST & 60,000  & 28$\times$28 & multi-classification \\
        Fashion-MNIST & 60,000 & 28$\times$28 & multi-classification \\
        \bottomrule
    \end{tabular}
    
    \end{subtable}
    \par\medskip
    \begin{subtable}{\linewidth}
    \centering
    \caption{Real-world federated datasets}\label{tab:real_dataset}
    \begin{tabular}{cccc}
    \toprule
        \textbf{Dataset} & \textbf{\#samples} & \textbf{\#features} & \textbf{Task}  \\
        \midrule
        NUS-WIDE & 269,648 & 66+144+75+128+225 & bin-class \\
        MovieLens & 1,000,209 & 9992+133 & reg \\
        \bottomrule
    \end{tabular}
    
    \end{subtable}
    
\end{table}

\noindent\textbf{Feature Division.}
\wzm{To evaluate FedOnce under different feature divisions,} we consider two kinds of divisions: 1) \textbf{Equal division} for public datasets: each dataset is divided into $k$ parties equally by features. 2) \textbf{Real-world division} for real-world federated datasets: \wzm{each dataset is naturally distributed on $k$ parties by features according to real-world settings}. Due to the distinct distribution of each feature, we further study the effect on the performance when the number and importance of features are biased in Section~\ref{subsec:exp_bias}.

In \textbf{equal division}, for \textit{gisette}, \textit{phishing}, \textit{covtype}, \textit{UJIIndoorLoc} and \textit{Superconduct} with 1D features, we equally divide the features into $k$ parties. For \textit{MNIST}, \textit{KMNIST} and \textit{Fashion-MNIST} with 2D image features, \wzm{since neighboring features are correlated,} we split the images with size 28$\times$28 both horizontally and vertically into four parts with size 14$\times$14 and assign each part to a party ($k=4$). This split on images is commonly adopted to evaluate the performance of vertical federated learning \cite{liu2020asymmetrical,chen2020vafl}. In \textbf{real-world division}, for \textit{NUS-WIDE}, we assign five views of each image into five parties, respectively. For \textit{MovieLens}, we assign 9923 one-hot identity features to one party and assign 133 auxiliary features to the other party.

\noindent\textbf{Baselines.}
To evaluate the communication efficiency of FedOnce-L0, we adopt five baselines: 1) Solo: each party trains locally with its own data and real labels. 2) Combine: all the data and labels are trained centrally. 3) SplitNN~\cite{vepakomma2018split}: a vertical federated learning algorithm for neural networks. 4) SecureBoost~\cite{cheng2019secureboost}: a vertical federated learning framework for gradient boosting decision trees (GBDT). Notably, \textbf{VF$^2$Boost~\cite{fu2021vf2boost} and Pivot~\cite{wu2020privacy}, producing the same accuracy as SecureBoost with additional techniques on encryption, are not individually compared}. Since FedOnce-L0 provides no protection on the released model during the training process, for a fair comparison, we ignore the cryptographic overhead when calculating the communication size of SecureBoost. 5) Linear-Combine: a linear model is trained on centralized datasets. Specifically, we train logistic regression for classification tasks and train ridge regression for regression tasks on centralized datasets similarly to Combine. Among these baselines, SecureBoost is not evaluated on \textit{MNIST}, \textit{KMNIST} and \textit{Fashion-MNIST} since GBDT is unsuitable for 2D features.

To evaluate FedOnce-L1, since the analyses of existing studies \cite{lou2018uplink,lou2020uplink,yao2019privacy} are based on different models which cannot be directly applied to our algorithm based on neural networks. For a fair comparison, we extend existing analyses to a baseline approach, named \textit{Priv-Baseline}, by only adopting their common model-agnostic inter-party analysis. The privacy analysis of Priv-Baseline and FedOnce-L1 are both based on moments accountant~\cite{abadi2016deep}. 

\noindent\textbf{Hardware.}
We conduct the experiments on a machine with two AMD EPYC 7543 32-Core CPUs, four NVIDIA A100 GPUs, and 504GB memory. We implement FedOnce in Python 3.8 and adopt the \textit{pytorch-dp}\footnote{\url{https://github.com/facebookresearch/pytorch-dp}} (now \textit{opacus}) library in FedOnce-L1.

\noindent\textbf{Models.}
The models used for FedOnce-L0 and FedOnce-L1 on each dataset are summarized in Table~\ref{tab:apdx_model_0} and Table~\ref{tab:apdx_model_1}, respectively. The sizes of hidden layers are carefully tuned for each dataset. FC($m\times n$) indicates fully connected layers with two hidden layers which have $m$ and $n$ nodes, respectively. CNN0 and CNN1 indicates two kinds of convolutional neural networks whose structures are shown in Fig.~\ref{fig:apdx_conv1}. NCF($m\times n$) indicates neural collaborative filtering \cite{he2017neural} with two hidden layers which have $m$ and $n$ nodes, respectively.

\begin{table}[htpb]
    \centering
    \caption{Models used in FedOnce-L0}\label{tab:apdx_model_0}
    \begin{tabular}{ccc}
    \toprule
    \textbf{Dataset} & \textbf{Guest Model} & \textbf{Host Model} \\ \midrule
        gisette & FC(100$\times$100$\times$50) & FC(30$\times$30) \\
        covtype & FC(200$\times$100$\times$100) & FC(30$\times$30) \\
        phishing & FC(30$\times$30) & FC(30) \\
        UJIIndoorLoc & FC(50$\times$50$\times$50$\times$30) & FC(30$\times$10) \\
        Superconduct & FC(30$\times$30) & FC(30$\times$10) \\
        MNIST & CNN0$\rightarrow$FC(128) & FC(128) \\
        KMNIST & CNN0$\rightarrow$FC(128) & FC(128) \\
        Fashion-MNIST & CNN0$\rightarrow$FC(128) & FC(128) \\
        NUS-WIDE & FC(60) & FC(50) \\
        MovieLens & NCF(32$\times$16)/NCF(128) & FC(10) \\ \bottomrule
    \end{tabular}
    
\end{table}

\begin{table}[htpb]
    \centering
    \caption{Models used in FedOnce-L1}\label{tab:apdx_model_1}
    \begin{tabular}{ccc}
    \toprule
    \textbf{Dataset} & \textbf{Guest Model} & \textbf{Host Model} \\ \midrule
        gisette & FC(30) & FC(10) \\
        covtype & FC(50) & FC(30) \\
        phishing & FC(30) & FC(10) \\
        UJIIndoorLoc & FC(50$\times$50) & FC(20$\times$20) \\
        Superconduct & FC(50$\times$50) & FC(20$\times$20) \\
        MNIST & CNN1 & FC(10) \\
        \bottomrule
    \end{tabular}
\end{table}

\begin{figure}[htpb]
    \centering
    \begin{subfigure}{.3\linewidth}
      \includegraphics[width=\linewidth]{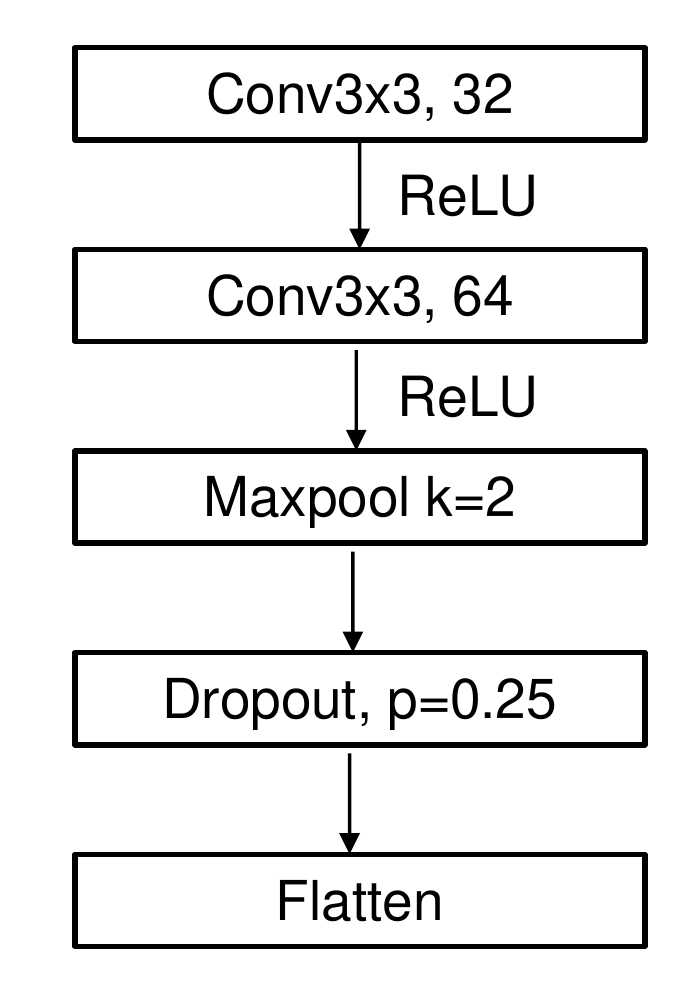}
      \caption{CNN0}
    \end{subfigure}
    \begin{subfigure}{.3\linewidth}
      \includegraphics[width=\linewidth]{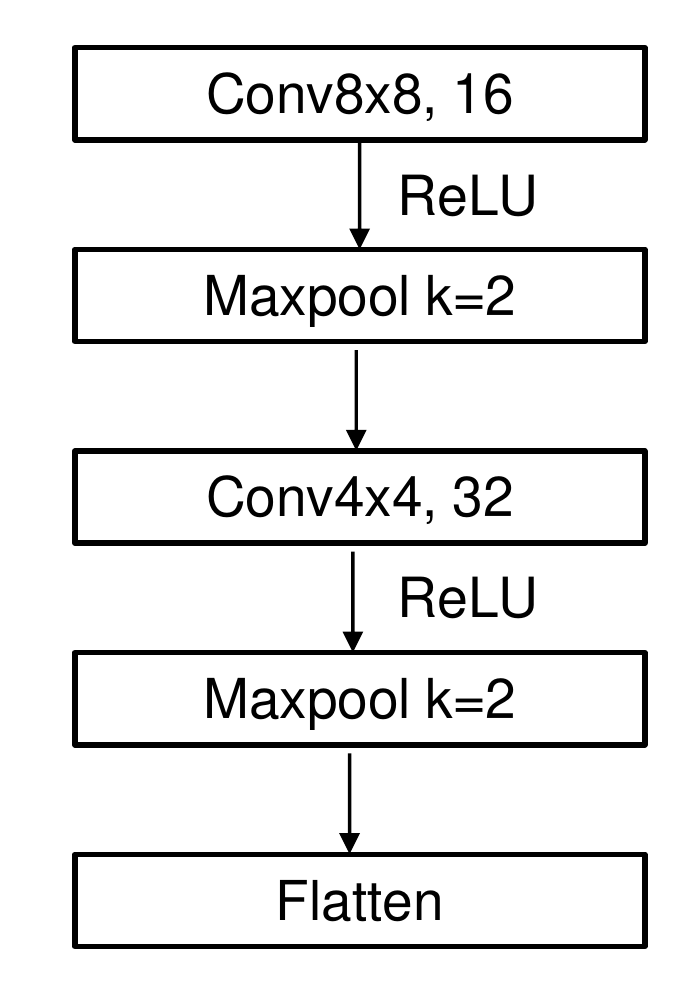}
      \caption{CNN1}
    \end{subfigure}
    \caption{Strucuture of CNN in experiments}
    \label{fig:apdx_conv1}
\end{figure}

\noindent\textbf{Training.}
For \textit{MNIST}, \textit{KMNIST}, \textit{Fashion-MNIST} and \textit{NUS-WIDE}, public test set is used for testing. In other datasets, training and test set are split by 9:1. Five-fold cross-validation is performed for hyperparameter tuning and the test performance is presented. We report accuracy for classification tasks and root mean square error (RMSE) between the real labels and the predicted labels for regression tasks.

In Section~\ref{subsec:exp_perf}, we present the performance of FedOnce-L0 when labels are assigned to each party. In Section~\ref{subsec:exp_comm}, the communication efficiency of FedOnce-L0 is evaluated. In Section~\ref{subsec:scalability}, the scalability of FedOnce is evaluated on a high-dimensional dataset. In Section~\ref{subsec:exp_bias}, we study the performance of FedOnce on biased feature split. In Section~\ref{subsec:apdx_unsup}, we study the effect of different unsupervised learning methods on FedOnce. In Section~\ref{subsec:exp_dp}, we analyze the privacy loss and present the performance of FedOnce-L1 under different privacy budget $\varepsilon$. In Section~\ref{subsec:noise_repr}, we demonstrate the impact of noise adding to representations on the performance.

\begin{figure*}[t!]
    \centering
    \begin{subfigure}{0.58\linewidth}\centering
  \includegraphics[width=.32\linewidth]{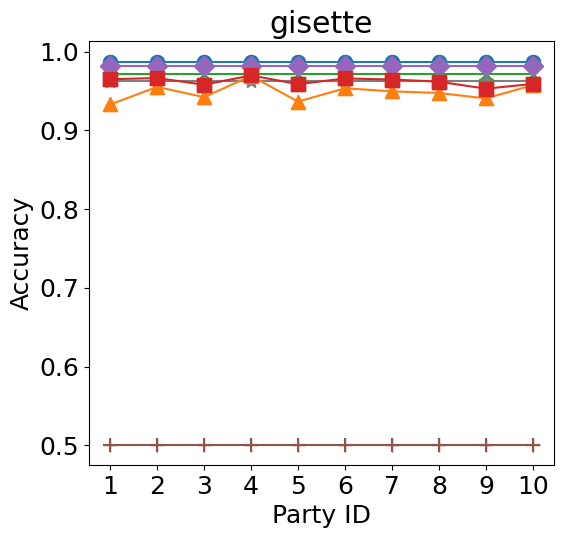}
  \includegraphics[width=.32\linewidth]{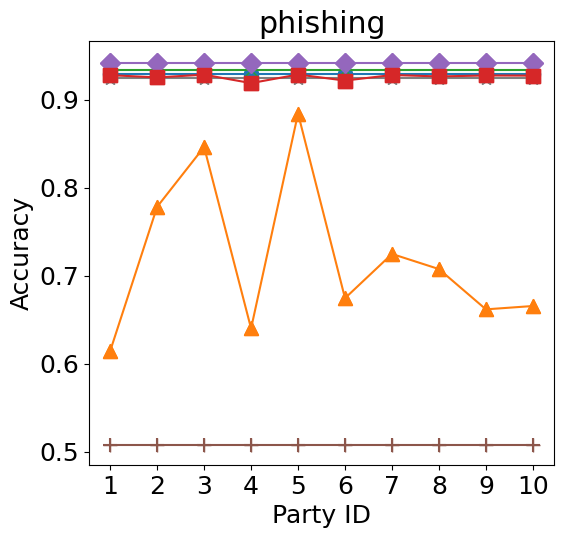}
  \includegraphics[width=.32\linewidth]{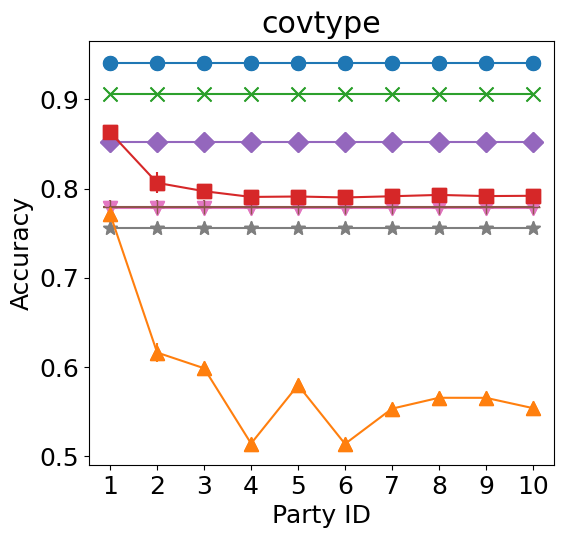}
  \includegraphics[width=\linewidth]{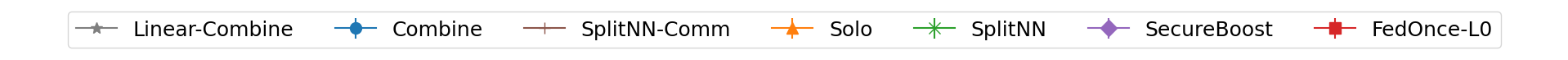}
  \caption{Binary classification datasets ($k=10$)}
  \end{subfigure}
  \medskip
  \begin{subfigure}{.38\linewidth}\centering
  \includegraphics[width=.48\linewidth]{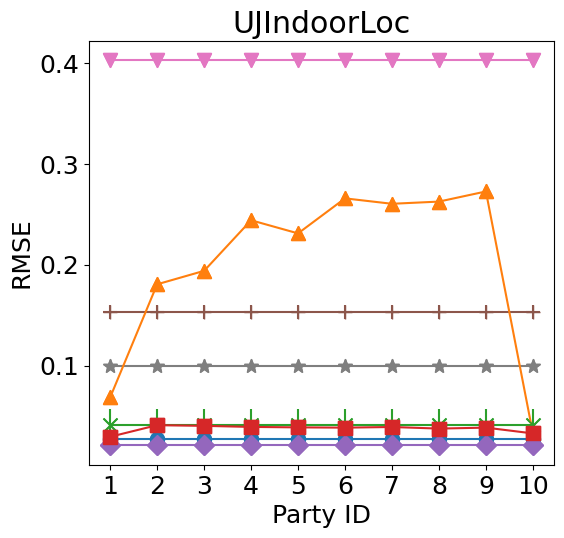}
  \includegraphics[width=.48\linewidth]{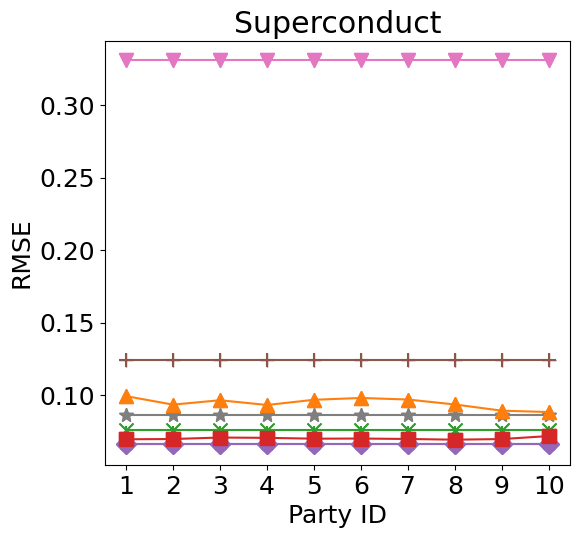}
  \includegraphics[width=\linewidth]{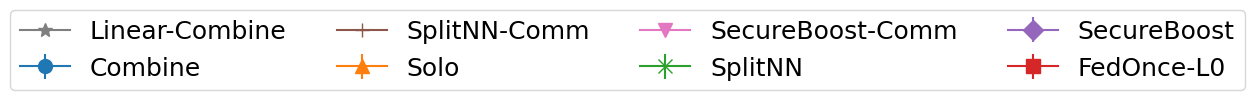}
  \caption{Regression datasets ($k=10$)}
  \end{subfigure}
  \medskip
  \begin{subfigure}{.58\linewidth}\centering
  \includegraphics[width=.32\linewidth]{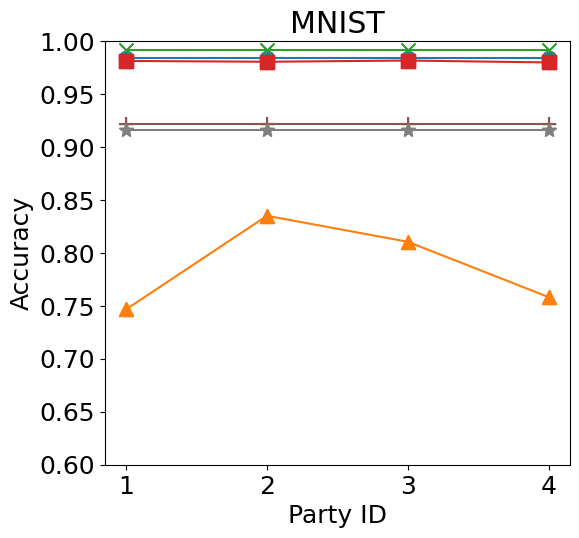}
  \includegraphics[width=.32\linewidth]{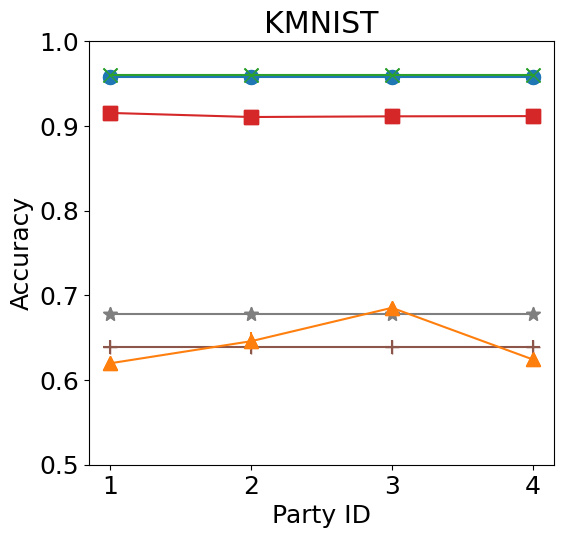}
  \includegraphics[width=.32\linewidth]{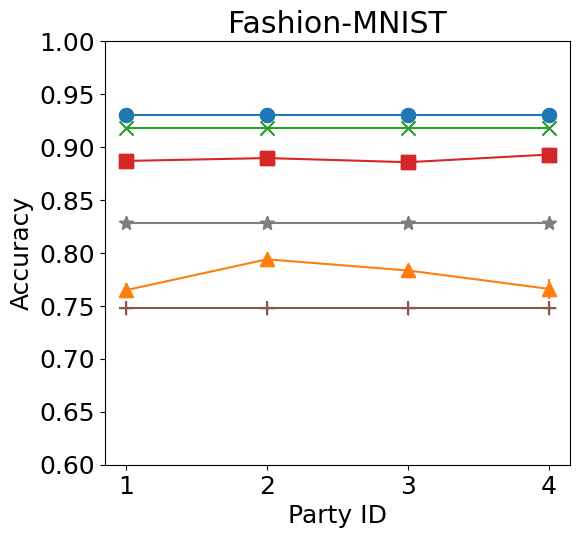}
  \includegraphics[width=.9\linewidth]{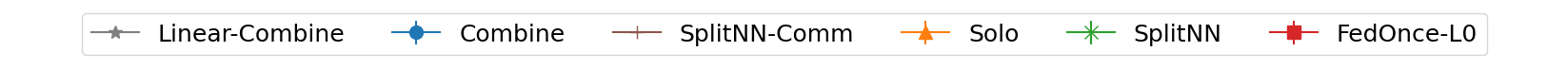}
  \caption{Multi-class classification datasets ($k=4$)}
  \end{subfigure}
 \medskip
  \begin{subfigure}{.38\linewidth}\centering
    \includegraphics[width=.48\linewidth]{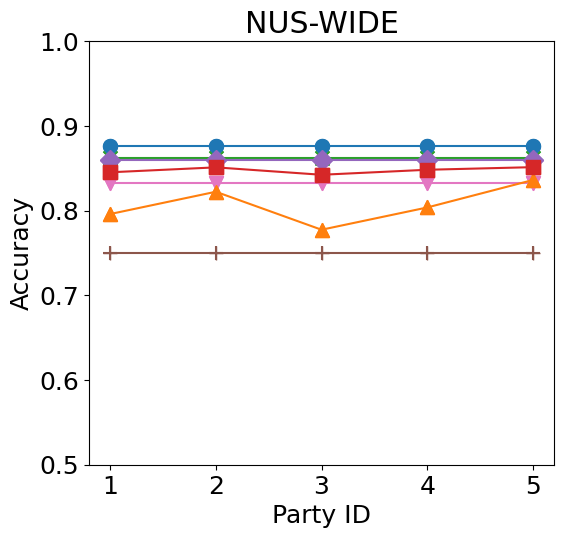}
    \includegraphics[width=.48\linewidth]{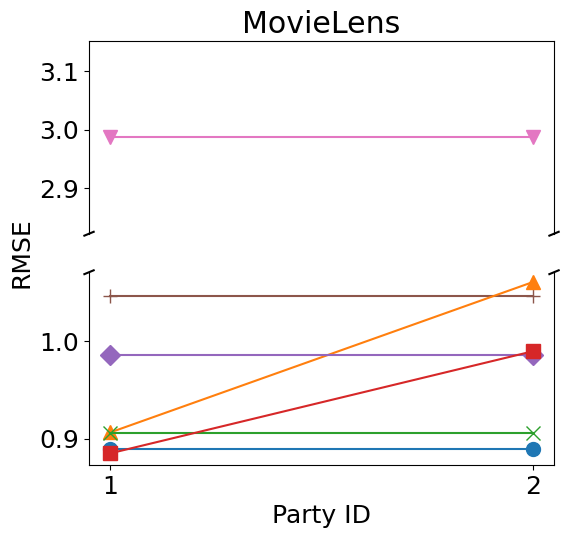}
    \includegraphics[width=\linewidth]{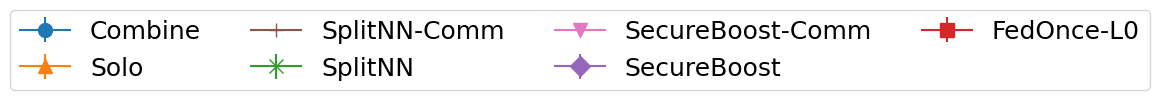}
    \caption{Real-world federated datasets \wzm{($k\in\{5,2\}$)}}
    \end{subfigure}
    \caption{Performance of FedOnce-L0 (i.e., with one communication round)}
    \label{fig:exp_perf_comm}
\end{figure*}

\subsection{Performance of Different Host Parties}\label{subsec:exp_perf}
In this subsection, we evaluate the performance of FedOnce-L0 when each party is chosen as the host party. For a fair comparison, we include two additional baselines: a) SplitNN-Comm: SplitNN terminates training when it costs the same amount of communication size as FedOnce-L0. b) SecureBoost-Comm: SecureBoost terminates training when it costs the same amount of communication size as FedOnce-L0. For FedOnce-L0, Solo, and SecureBoost(-Comm), $j$ indicates the party with labels; for SplitNN(-Comm) and Combine, $j$ is meaningless because each party is treated equally in these settings. The results are displayed in Fig.~\ref{fig:exp_perf_comm}. Since failing to finish the first iteration under the same communication size as FedOnce, SecureBoost-Comm is missing in \textit{gisette} and \textit{phishing}.

Three observations can be made from Fig.~\ref{fig:exp_perf_comm}. First, FedOnce-L0 significantly outperforms Solo, SplitNN-Comm, and SecureBoost-Comm on each party, indicating a huge performance-boosting against SplitNN and SecureBoost under the same communication size. Second, for each party $j$, FedOnce-L0 achieves close performance compared to SplitNN, SecureBoost, and even Combine. For example, FedOnce-L0 only incurs 0.3\% accuracy loss compared to Combine on \textit{MNIST}, and incurs 0.09\% accuracy loss compared to Combine on \textit{phishing}. Third, except on simple datasets like \textit{gisette} and \textit{phishing}, centralized linear models are significantly outperformed by FedOnce and most other baselines. This indicates that vertical federated learning methods with these linear models, which produce even poorer performance, are not competitive in most cases.

\begin{figure*}[htpb]
    \centering
      \begin{subfigure}{0.58\linewidth}\centering
  \includegraphics[width=.32\linewidth]{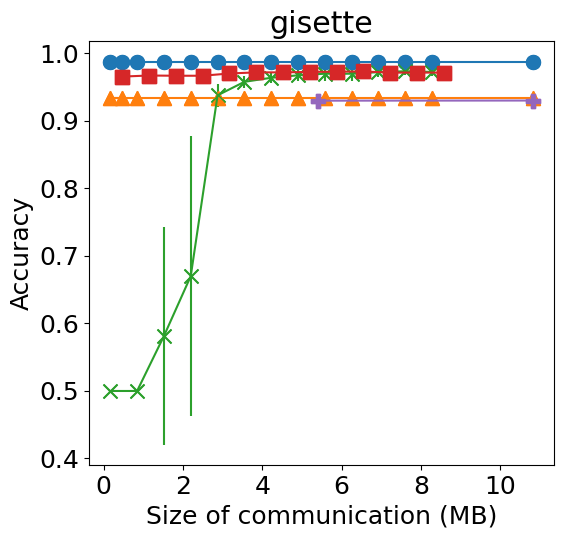}
  \includegraphics[width=.32\linewidth]{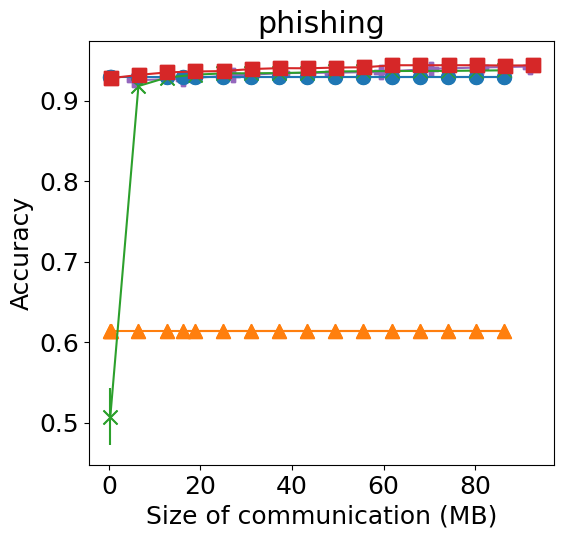}
  \includegraphics[width=.32\linewidth]{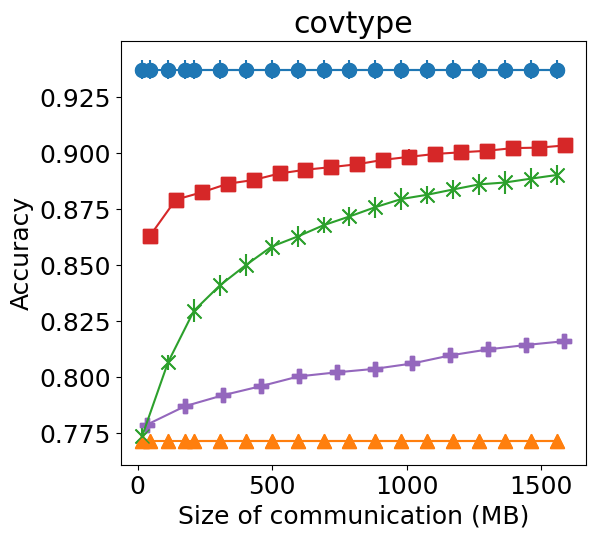}
  \includegraphics[width=.9\linewidth]{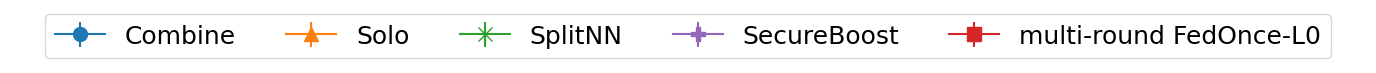}
  \caption{Binary Classification Datasets ($k=10$)}
  \end{subfigure}
  \medskip
  \begin{subfigure}{.38\linewidth}\centering
  \includegraphics[width=.48\linewidth]{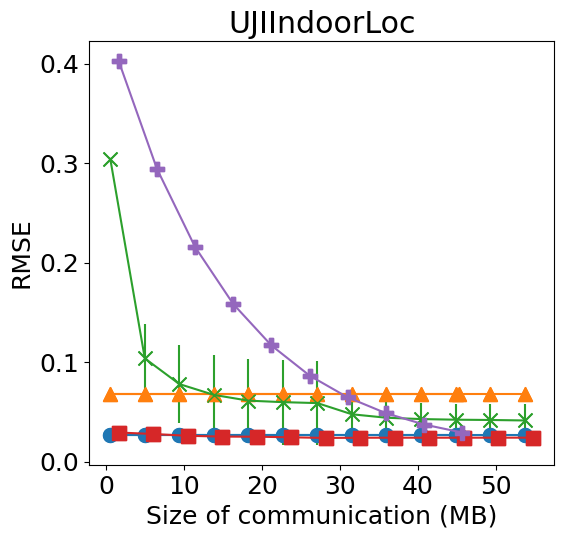}
  \includegraphics[width=.48\linewidth]{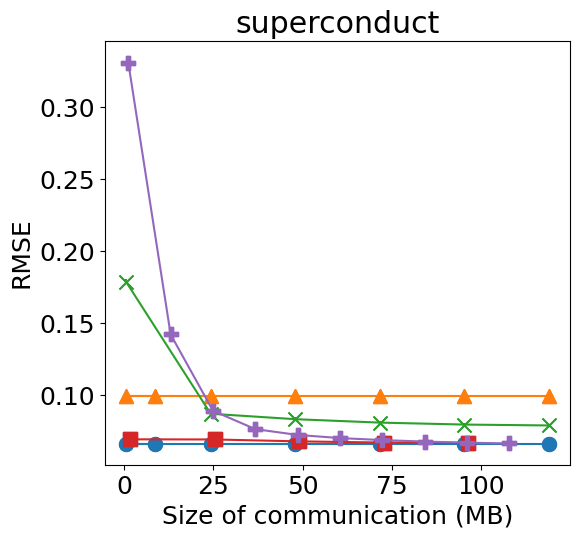}
  \includegraphics[width=.9\linewidth]{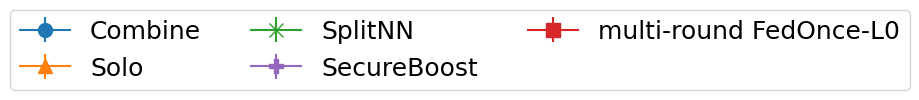}
  \caption{Regression Datasets ($k=10$)}
  \end{subfigure}
  \medskip
  \begin{subfigure}{.58\linewidth}\centering
  \includegraphics[width=.32\linewidth]{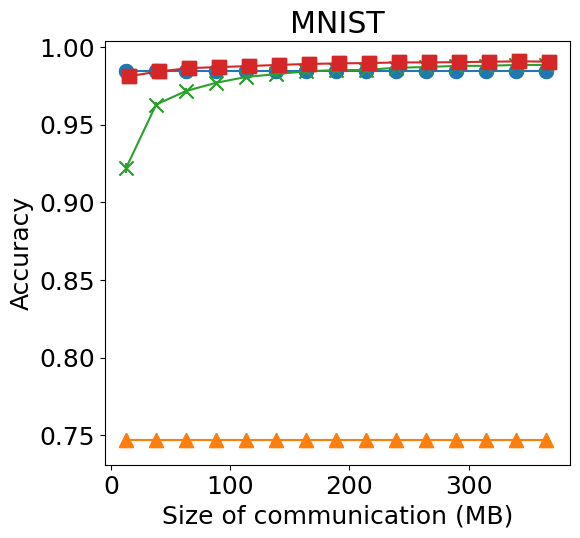}
  \includegraphics[width=.32\linewidth]{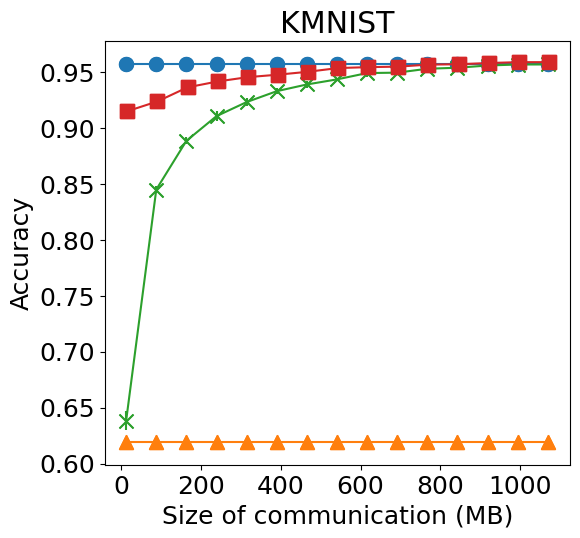}
  \includegraphics[width=.32\linewidth]{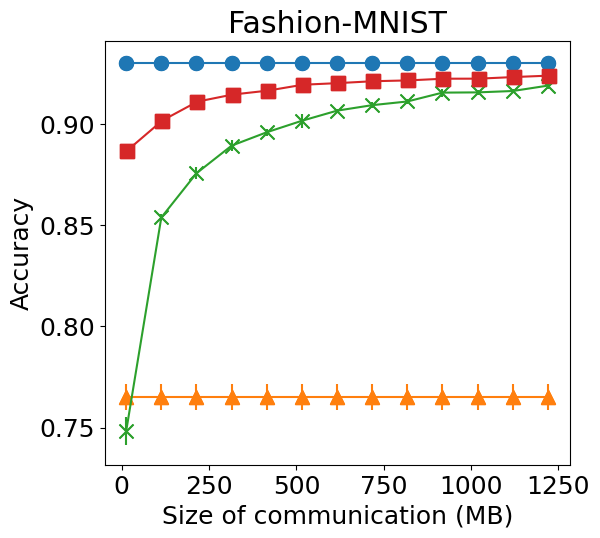}
  \includegraphics[width=.7\linewidth]{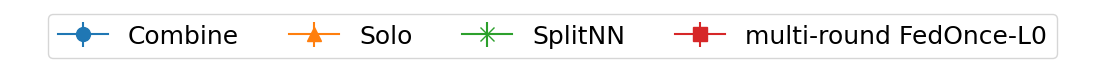}
  \caption{Multi-class Classification Datasets ($k=4$)}
  \end{subfigure}
 \medskip
  \begin{subfigure}{.38\linewidth}\centering
    \includegraphics[width=.48\linewidth]{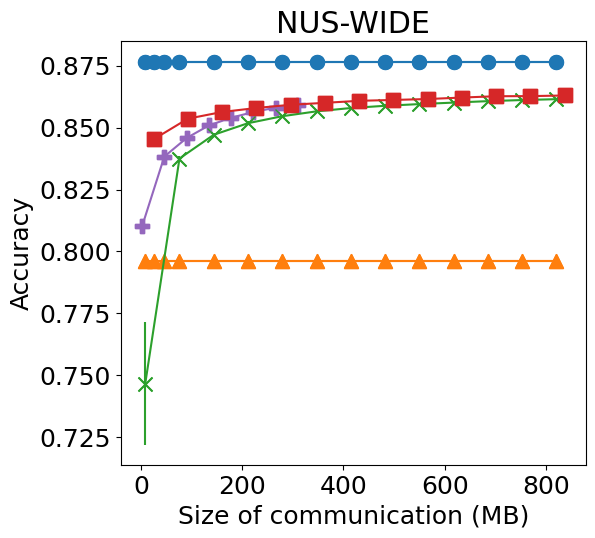}
    \includegraphics[width=.48\linewidth]{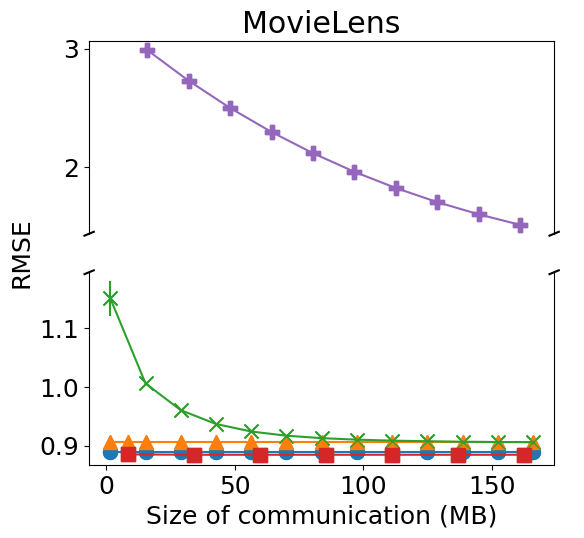}
    \includegraphics[width=.9\linewidth]{figure/post_comm_legend_5_double_col.png}
    \caption{Real-world federated datasets \wzm{($k\in\{5,2\}$)}}
    \end{subfigure}
    \vspace{-5pt}
    \caption{Performance of multi-round FedOnce-L0 and FedOnce-L0 (i.e., the first data point of multi-round FedOnce-L0)}
    \label{fig:apdx_post_comm}
\end{figure*}

\noindent\textbf{Hyperparameters} For each dataset, we summarize the hyperparameters of FedOnce-L0 in Table~\ref{tab:hyper_l0} and hyperparameters of FedOnce-L1 in Appendix~A. In the table, $\eta$ refers to the learning rate, $\lambda$ refers to weight decay, $b$ refers to batch size, $T$ refers to the number of epochs, $d$ refers to the dimension of representations. $f$ indicates the frequency of permutation matrix $P$ to be updated. For example, if the update frequency is 3, $P$ will be updated every three epochs.

Particularly, we study the effect of an important hyperparameter, the dimension of random targets $d$, in Fig.~\ref{fig:exp_perf_hyper} in which the mean and standard variance of the performance when each party is chosen as host party is reported. Intuitively, $d$ reflects the amount of information that can be transferred between the host party and guest parties. A small $d$ causes a less representative $R_j$, leading to less contribution of the data on guest parties to the aggregation model. A large $d$ causes a high-dimensional searching space of representations, among which it might be infeasible to locate the optimal $R_j$. Therefore, a moderate $d$ leads to the best performance of FedOnce-L0. From Fig.~\ref{fig:exp_perf_hyper}, we observe that FedOnce-L0 achieves the best performance when $d\approx16$ on \textit{MNIST}. The choices of $d$ on other datasets are similarly made.

\begin{table}[htpb]
    \centering
    \caption{Training time of FedOnce-L0 and SplitNN}
    \label{tab:train_time}
    \begin{tabular}{ccc}
    \toprule
        \multirow{2}{*}{\textbf{Dataset}} & \multicolumn{2}{c}{\textbf{Training Time(min)}} \\
        \cmidrule{2-3}
         & \textbf{SplitNN}  & \textbf{FedOnce-L0} \\
         \midrule
        gisette & 0.62 & 0.60 \\
        covtype & 171.12 & 52.33  \\ 
        phishing & 4.07 & 1.63  \\
        UJIIndoorLoc & 23.15 & 8.96 \\
        Superconduct & 16.55 & 7.54 \\
        MNIST & 53.58 & 24.17 \\
        KMNIST & 56.47 & 24.13 \\
        Fashion-MNIST & 54.58 & 23.00 \\
        NUS-WIDE & 8.86 & 5.17 \\
        MovieLens & 39.88 & 34.91 \\
         \bottomrule
    \end{tabular}
    
\end{table}

\begin{figure}[htpb]
    \centering
    \includegraphics[width=.75\linewidth]{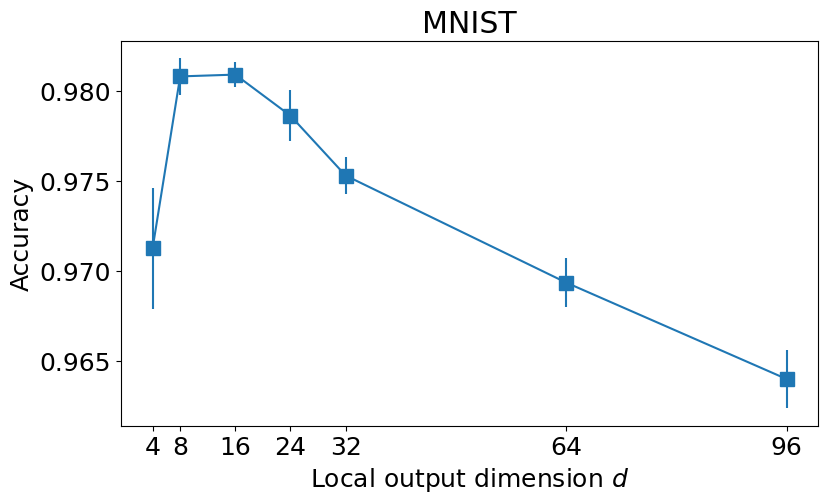}
    \caption{Performance of FedOnce-L0 with different $d$}
    \label{fig:exp_perf_hyper}
    \vspace{-5pt}
\end{figure}

\begin{table*}[htpb]
    \begin{minipage}{\linewidth}
    \centering
    \caption{Hyperparameters of FedOnce-L0 on each dataset}
    \label{tab:hyper_l0}
    \begin{tabular}{ccccccccccccc}
    \toprule
        \multirow{2}{*}{\textbf{Dataset}} & \multirow{2}{*}{\makecell{\textbf{Host}\\\textbf{Party}\textsuperscript{\rm 1}}} & \multicolumn{4}{c}{\textbf{Guest Model}} & \multicolumn{4}{c}{\textbf{Host Model}} & \multirow{2}{*}{$d$} & \multirow{2}{*}{$f$} & \multirow{2}{*}{\textbf{Optimizer}}\\ \cmidrule(lr){3-6}\cmidrule(lr){7-10}
         & & $\eta$ & $\lambda$ & $b$ & $T$ & $\eta$ &  $\lambda$ & $b$ & $T$ \\
         \midrule
         gisette & * & 1e-4 & 1e-5 & 100 & 100 & 1e-4 & 1e-4 & 100 & 100 & 3 & 1 & Adam\\
         covtype & * & 1e-3 & 1e-5 & 100 & 30 & 1e-3 & 1e-3 & 100 & 100 & 3 & 1 & Adam \\
         phishing & * & 1e-4 & 1e-5 & 100 & 100 & 1e-4 & 1e-4 & 100 & 300 & 3 & 1 & Adam \\
         UJIIndoorLoc & * & 3e-4 & 1e-4 & 100 & 500 & 1e-4 & 1e-4 & 100 & 600 & 3 & 1 & Adam \\
         Superconduct & * & 3e-4 & 1e-4 & 100 & 500 & 3e-4 & 1e-4 & 128 & 600 & 3 & 1 & Adam \\
         MNIST & * & 1e-4 & 1e-5 & 128 & 100 & 1e-3 & 1e-5 & 128 & 200 & 16 & 3 & Adam \\
         KMNIST & * & 1e-4 & 1e-5 & 128 & 100 & 1e-3 & 1e-5 & 128 & 200 & 16 & 3 & Adam \\
         Fashion-MNIST & * & 1e-4 & 1e-5 & 128 & 100 & 1e-3 & 1e-5 & 128 & 200 & 16 & 3 & Adam \\
         NUS-WIDE & * & 3e-4 & 1e-5 & 128 & 50 & 7e-5 & 1e-5 & 128 & 50 & 8 & 1 & Adam \\
         \multirow{2}{*}{MovieLens} & 0 & 1e-4 & 1e-5 & 64 & 30 & 1e-4 & 2e-4 & 64 & 40 & 3 & 1 & Adam \\
         & 1 & 1e-4 & 1e-5 & 64 & 30 & 5e-4 & 1e-5 & 64 & 100 & 48 & 1 & Adam \\
         \bottomrule
    \end{tabular}
    \end{minipage}
    
    \textsuperscript{\rm 1}The party that is set as host party. $j$: the hyperparameters when party $j$ is set as host party. * means the hyperparameters remain the same when each party is set as host party.
    
\end{table*}

\noindent\textbf{Training Time.}
For each dataset, we record the training time of SplitNN and FedOnce-L0 in Table~\ref{tab:train_time}. As observed from the table, FedOnce-L0 generally has a lower training time than SplitNN; this is because the local models in FedOnce can be trained in parallel, while all the local models are connected with the aggregation model and trained in each iteration in SplitNN. We also highlight that the communication time is measured in a shared-memory environment of a single machine. In reality, the communication time can also be a huge burden for SplitNN which requires much more communication size as demonstrated in Fig.~\ref{fig:exp_perf_comm}.

\subsection{Communication Efficiency}\label{subsec:exp_comm}
In this subsection, we evaluate the communication efficiency of FedOnce-L0. Without loss of generality, we set party $\mathcal{P}_1$ as the host party and report the mean and standard variance of performance across five folds. After completing one-shot communication, we continue training the models obtained by FedOnce-L0 and track the mean and standard variance of performance across five folds in each iteration. The results are presented in Fig.~\ref{fig:apdx_post_comm}.

Three observations can be made from the results. First, FedOnce-L0 (the first data point of multi-round FedOnce-L0) significantly outperforms SplitNN and SecureBoost at a small communication size. Second, only under considerable communication size, can SplitNN and SecureBoost achieve the same performance as FedOnce-L0. For example, on \textit{KMNIST} dataset, SplitNN requires a 17x communication size to achieve the same performance as FedOnce-L0. Third, multi-round FedOnce-L0 consistently outperforms SplitNN and SecureBoost as communication size increases.

\begin{table*}[t!]
\caption{Performance of FedOnce-L1 with different $\varepsilon$}\label{tab:exp_dp_perf}
\setlength\tabcolsep{2.6pt}
    \begin{subtable}{.49\textwidth}\centering
    \subcaption{Performance on gisette, $\delta=10^{-4}$}
    \begin{tabular}{c c c c c c c}
        \toprule
        \multirow{2}{*}{\textbf{Algorithm}} & \multirow{2}{*}{$\varepsilon$} & \multicolumn{4}{c}{\textbf{Accuracy}} \\\cmidrule{3-7}
        & & \textbf{Party 1} & \textbf{Party 2} & \textbf{Party 3} & \textbf{Party 4} & \textbf{Range} \\
        \midrule
         \multirow{4}{*}{Priv-Baseline} & 2 & 86.20\% & 86.09\% & 75.14\% & 73.00\% & 73.00\%$\sim$86.20\% \\
         & 4 & 89.31\% & 88.10\% & 80.67\% & 80.57\% & 80.57\%$\sim$89.31\% \\
         & 6 & 89.49\% & 88.46\% & 84.01\% & 82.08\% & 82.08\%$\sim$89.49\% \\
         & 8 & 89.71\% & 88.42\% & 83.13\% & 82.55\% & 82.55\%$\sim$89.71\% \\ \midrule
         \multirow{5}{*}{FedOnce-L1} & 2 & 88.81\% & 88.05\% & 78.88\% & 79.51\% & 78.88\%$\sim$88.81\%  \\
         & 4 & 89.73\% & 88.54\% & 81.77\% & 82.88\% & 81.77\%$\sim$89.73\% \\
         & 6 & 89.25\% & 88.67\% & 83.53\% & 82.77\% &  82.77\%$\sim$89.25\% \\
         & 8 & 93.37\% & 93.01\% & 93.14\% & 93.64\% &  93.01\%$\sim$93.64\% \\ 
         & $\infty$ & 96.91\% & 97.11\% & 97.61\% & 96.42\% & 96.42\%$\sim$97.61\% \\
         \bottomrule
    \end{tabular}
    \end{subtable}
    \hspace{\fill}\medskip
    \begin{subtable}{.49\textwidth}\centering
    \subcaption{Performance on covtype, $\delta=10^{-5}$}
    \begin{tabular}{c c c c c c c}
        \toprule
        \multirow{2}{*}{\textbf{Algorithm}} & \multirow{2}{*}{$\varepsilon$} & \multicolumn{4}{c}{\textbf{Accuracy}} \\\cmidrule{3-7}
        & & \textbf{Party 1} & \textbf{Party 2} & \textbf{Party 3} & \textbf{Party 4} & \textbf{Range}  \\
        \midrule
         \multirow{4}{*}{Priv-Baseline} & 2 & 74.60\% & 65.88\% & 60.19\% & 66.53\% & 60.19\%$\sim$74.60\% \\
         & 4 & 75.89\% & 69.14\% & 69.88\% & 68.70\% & 68.70\%$\sim$75.89\% \\
         & 6 & 78.27\% & 73.75\% & 73.46\% & 73.87\% & 73.46\%$\sim$78.27\% \\
         & 8 & 79.25\% & 73.96\% & 73.40\% & 74.04\% & 73.40\%$\sim$79.25\% \\ \midrule
         \multirow{5}{*}{FedOnce-L1} & 2 & 76.88\% & 70.74\% & 70.11\% & 71.56\% & 70.11\%$\sim$76.88\% \\
         & 4 & 77.80\% & 73.33\% & 73.71\% & 72.15\% & 72.15\%$\sim$77.80\% \\
         & 6 & 79.23\% & 74.18\% & 73.66\% & 73.98\% & 73.98\%$\sim$79.23\% \\
         & 8 & 80.49\% & 75.04\% & 74.74\% & 74.80\% & 74.74\%$\sim$80.49\% \\ 
       & $\infty$ & 82.24\% & 74.48\% & 74.29\% & 74.06\% & 74.06\%$\sim$82.24\% \\
         \bottomrule
    \end{tabular}
    \end{subtable}

    \begin{subtable}[t]{.49\textwidth}\centering
    \subcaption{Performance on phishing, $\delta=10^{-5}$}
    \begin{tabular}{c c c c c c c}
        \toprule
        \multirow{2}{*}{\textbf{Algorithm}} & \multirow{2}{*}{$\varepsilon$} & \multicolumn{4}{c}{\textbf{Accuracy}} \\\cmidrule{3-7}
        & & \textbf{Party 1} & \textbf{Party 2} & \textbf{Party 3} & \textbf{Party 4} & \textbf{Range} \\
        \midrule
         \multirow{4}{*}{Priv-Baseline} & 2 & 87.69\% & 87.22\% & 83.98\% & 83.64\% & 83.64\%$\sim$87.69\% \\
         & 4 & 89.43\% & 87.49\% & 88.79\% & 87.92\% & 87.49\%$\sim$88.79\%  \\
         & 6 & 90.31\% & 89.51\% & 89.55\% & 89.29\% & 89.29\%$\sim$90.31\% \\
         & 8 & 90.89\% & 90.76\% & 89.69\% & 90.51\% & 89.69\%$\sim$90.89\% \\  \midrule
         \multirow{5}{*}{FedOnce-L1} & 2 & 90.10\% & 89.36\% & 87.78\% & 88.95\% & 87.78\%$\sim$90.10\% \\
         & 4 & 91.10\% & 90.24\% & 90.90\% & 91.16\% & 90.24\%$\sim$91.16\%  \\
         & 6 & 91.56\% & 90.65\% & 90.36\% & 90.16\% & 90.36\%$\sim$91.56\% \\
         & 8 & 90.78\% & 91.12\% & 90.63\% & 91.30\% & 90.63\%$\sim$91.30\% \\ 
        & $\infty$ & 91.95\% & 91.28\% & 91.21\% & 90.90\% & 90.90\%$\sim$91.95\% \\
         \bottomrule
    \end{tabular}
    \end{subtable}
    \hspace{\fill}\medskip
    \begin{subtable}[t]{.49\textwidth}\centering
    \subcaption{Performance on UJIIndoorLoc, $\delta=10^{-5}$}
    \begin{tabular}{c c c c c c c}
        \toprule
        \multirow{2}{*}{\textbf{Algorithm}} & \multirow{2}{*}{$\varepsilon$} & \multicolumn{4}{c}{\textbf{RMSE}} \\\cmidrule{3-7}
        & & \textbf{Party 1} & \textbf{Party 2} & \textbf{Party 3} & \textbf{Party 4} & \textbf{Range} \\
        \midrule
         \multirow{4}{*}{Priv-Baseline} & 2 & 0.2326 & 0.2275 & 0.2292 & 0.2501 & 0.2275-0.2501 \\
         & 4 & 0.1487 & 0.1578 & 0.2062 & 0.1943 & 0.1487-0.2062 \\
         & 6 & 0.1053 & 0.1224 & 0.1475 & 0.1914 & 0.1053-0.1914 \\
         & 8 & 0.1051 & 0.1198 & 0.1409 & 0.1345 &  0.1051-0.1345 \\ \midrule
         \multirow{5}{*}{FedOnce-L1} & 2 & 0.1066 & 0.1102 & 0.1087 & 0.1489 & 0.1066-0.1489 \\
         & 4 & 0.0769 & 0.0811 & 0.0843 & 0.1362 & 0.0769-0.1362 \\
         & 6 & 0.0714 & 0.0746 & 0.0664 & 0.1223 & 0.0664-0.1223 \\
         & 8 & 0.0542 & 0.0614 & 0.0678 & 0.0730 & 0.0542-0.0730 \\ 
        & $\infty$ & 0.0355 & 0.0369 & 0.0402 & 0.0359 &  0.0355-0.0402 \\
         \bottomrule
    \end{tabular}
    \vspace{-5pt}
    \end{subtable}

    \begin{subtable}[t]{.49\textwidth}\centering
    \subcaption{Performance on MNIST, $\delta=10^{-5}$}
\begin{tabular}{c c c c c c c}
        \toprule
        \multirow{2}{*}{\textbf{Algorithm}} & \multirow{2}{*}{$\varepsilon$} & \multicolumn{4}{c}{\textbf{Accuracy}} \\\cmidrule{3-7}
        & & \textbf{Party 1} & \textbf{Party 2} & \textbf{Party 3} & \textbf{Party 4}& \textbf{Range} \\
        \midrule
         \multirow{4}{*}{Priv-Baseline} & 2 & 45.29\% & 39.98\% & 50.28\% & 44.49\% & 39.98\%$\sim$50.28\% \\
         & 4 & 70.07\% & 76.70\% & 78.00\% & 75.23\% & 70.07\%$\sim$78.00\% \\
         & 6 & 83.64\% & 82.92\% & 84.00\% & 82.60\% &  82.60\%$\sim$84.00\% \\
         & 8 & 85.40\% & 86.33\% & 86.21\% & 85.28\% &  85.28\%$\sim$86.33\%  \\ \midrule
         \multirow{5}{*}{FedOnce-L1} & 2 & 82.13\% & 82.57\% & 83.30\% & 81.43\% & 81.43\%$\sim$83.30\% \\
         & 4 & 87.01\% & 87.51\% & 87.91\% & 87.02\% & 87.01\%$\sim$87.91\%  \\
         & 6 & 88.05\% & 89.03\% & 88.24\% & 87.91\% & 87.91\%$\sim$89.03\% \\
         & 8 & 90.52\% & 90.63\% & 89.61\% & 89.32\% & 89.32\%$\sim$90.63\% \\ 
         & $\infty$ & 92.05\% & 92.97\% & 93.11\% & 91.74\% & 91.74\%$\sim$93.11\% \\
         \bottomrule
    \end{tabular}
    \end{subtable}
    \hspace{\fill}
    \begin{subtable}[t]{.49\textwidth}\centering
    \subcaption{Performance on Superconduct, $\delta=10^{-5}$}
    \begin{tabular}{c c c c c c c}
        \toprule
        \multirow{2}{*}{\textbf{Algorithm}} & \multirow{2}{*}{$\varepsilon$} & \multicolumn{4}{c}{\textbf{RMSE}} \\\cmidrule{3-7}
        & & \textbf{Party 1} & \textbf{Party 2} & \textbf{Party 3} & \textbf{Party 4} & \textbf{Range}\\
        \midrule
         \multirow{4}{*}{Priv-Baseline} & 2 & 0.1668 & 0.1667 & 0.1783 & 0.1373 & 0.1373-0.1783 \\
         & 4 & 0.1253 & 0.1387 & 0.1348 & 0.1246 & 0.1246-0.1387 \\
         & 6 & 0.1173 & 0.1256 & 0.1299 & 0.1230 & 0.1173-0.1299 \\
         & 8 & 0.1125 & 0.1140 & 0.1308 & 0.1148 & 0.1125-0.1308 \\ \midrule
         \multirow{5}{*}{FedOnce-L1} & 2 & 0.1080 & 0.1043 & 0.1059 & 0.1110 & 0.1043-0.1110\\
         & 4 & 0.1071 & 0.1003 & 0.1009 & 0.1052 & 0.1003-0.1052 \\
         & 6 & 0.0985 & 0.1001 & 0.0984 & 0.0973 & 0.0973-0.1001 \\
         & 8 & 0.1020 & 0.1075 & 0.0985 & 0.0956 &  0.0956-0.1075 \\ 
        & $\infty$ & 0.0832 & 0.0833 & 0.0841 & 0.0801 & 0.0801-0.0841 \\
         \bottomrule
    \end{tabular}
    \end{subtable}
     \vspace{-5pt}
\end{table*}

\subsection{Scalability} \label{subsec:scalability}

In this subsection, we evaluate the scalability of FedOnce on a high-dimensional dataset \textit{gisette} with 5000 features. The performance of FedOnce-L0 and baselines when $k$ ranges from $[10,200]$ is displayed in Fig.~\ref{fig:exp_scale}. The error bars indicate the standard variance of performance when labels are assigned to different parties. SecureBoost-Comm is omitted since SecureBoost cannot finish the first iteration with the same communication size as FedOnce.

\begin{figure}[t!]
    \centering
    \includegraphics[width=.7\linewidth]{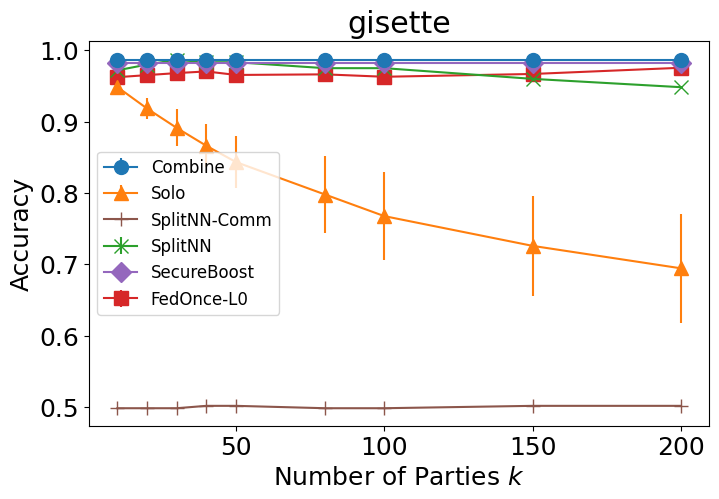}
    \caption{Performance of FedOnce-L0 (i.e., with one communication round) when $k$ scales from 10 to 200}
     \vspace{-5pt}
    \label{fig:exp_scale}
\end{figure}

From Fig.~\ref{fig:exp_scale}, we can make two observations. First, as the number of parties $k$ increases, FedOnce-L0, whose performance remains stable while the performance of Solo degrades rapidly, is scalable. Second, even at a large number of parties, FedOnce consistently outperforms SecureBoost and SplitNN with the same communication size.

\subsection{Performance on Biased Datasets}\label{subsec:exp_bias}

In this subsection, we study how the quality of features affects the performance of FedOnce. For convenience, the experiment is conducted under a two-party setting \wzm{due to the constraint of our designed metric of bias}. First, we leveraged XGBoost\footnote{\url{https://github.com/dmlc/xgboost}} library to calculate the importance of each feature. Then, the top-$p$ important features are marked as \textit{dominant features}. We randomly assign a ratio $\alpha$ of dominant features to the host party and assign the remaining $1-\alpha$ of dominant features to the guest party. The other features are randomly assigned to two parties to ensure both parties hold the same number of features. Finally, we adjust the ratio $\alpha\in[0,1]$ to control the feature quality of the host party. The host party holds the best features when $\alpha=1$ while holding the worst features when $\alpha=0$. We conduct experiments on \textit{phishing} (classification) and \textit{UJIndoorLoc} (regression). Under five-fold cross-validation, the mean performance of the host party as well as the standard variance across five folds are reported in Fig.~\ref{fig:exp_perf_bias}. 
\begin{figure}[htpb]
    \centering
    \includegraphics[width=.48\linewidth]{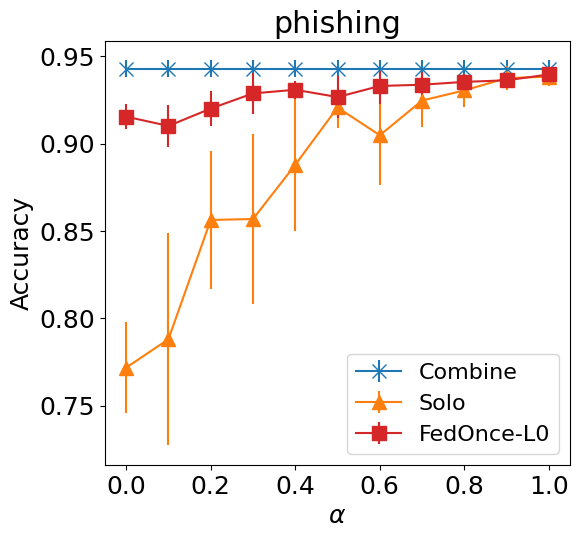}
    \includegraphics[width=.48\linewidth]{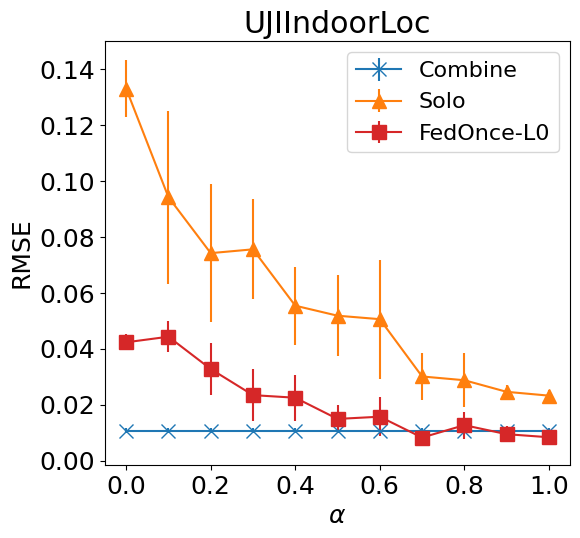}
    \caption{Performance on biased datasets}
    \vspace{-5pt}
    \label{fig:exp_perf_bias}
\end{figure}

Two observations can be made from Fig.~\ref{fig:exp_perf_bias}. First, FedOnce achieves the most significant performance when the quality of features is poor (i.e., $\alpha\rightarrow 0$). Second, when the feature quality of the host party is good enough (i.e., $\alpha\rightarrow 1$), federated learning is no longer necessary since even Solo provides competitive performance compared to Combine.

\subsection{Different Unsupervised Learning Methods}\label{subsec:apdx_unsup}

FedOnce is suitable for different unsupervised learning methods. Among these unsupervised learning methods, we compare the performance of FedOnce with NAT~\cite{bojanowski2017unsupervised} (FedOnce-L0) and Principal Component Analysis~\cite{xu1994theories} (FedOnce-PCA). Specifically, \wzm{fixing both the dimensions of representative features and the number of principal components the same,} we present the results of four typical datasets covering image features and multi-variant features in Table~\ref{tab:apdx_pca}. Without loss of generality, party $\mathcal{P}_1$ is selected as the host party. \wzm{The choice of $K$ is identical to that in Section~\ref{subsec:exp_perf}.}

\begin{table}[htpb]
    \centering
    \caption{Performance of FedOnce on different unsupervised learning methods}
    \begin{tabular}{cccc}
    \toprule
        \multirow{2}{*}{\textbf{Feature Type}} & \multirow{2}{*}{\textbf{Dataset}} & \multicolumn{2}{c}{\textbf{Accuracy}} \\
        \cmidrule{3-4}
       &  & \textbf{FedOnce-L0}  & \textbf{FedOnce-PCA} \\
         \midrule
        \multirow{2}{*}{{Multi-variant}} & gisette & 96.51\% & 94.10\% \\
        & phishing & 92.78\% & 92.41\% \\ \midrule
        \multirow{2}{*}{{Image}} & MNIST & 98.05\% & 25.98\% \\
        & KMNIST & 91.73\% & 25.59\% \\
         \bottomrule
    \end{tabular}
    \label{tab:apdx_pca}
\end{table}

From Table~\ref{tab:apdx_pca}, we observe that FedOnce-PCA can achieve close performance to FedOnce-L0 on multi-variant datasets, but has very poor performance on image datasets. This is because PCA is incapable to extract useful representations from complex features like images. On the contrary, NAT is a suitable unsupervised learning method for FedOnce that can handle different types of features. \wzm{Generally, more ``advanced'' unsupervised learning algorithms tend to produce a better performance on FedOnce. The performance of unsupervised learning methods can be evaluated by commonly used metrics in the literature. For example, NAT is evaluated by the performance of a linear classifier on the learned representations.}

\subsection{Differential Privacy}\label{subsec:exp_dp}
In this subsection, we first compare the privacy loss of FedOnce-L1 and simple division by theoretical analysis, then present how the moments division is superior to simple division on performance.

\noindent\textbf{Privacy Loss Analysis.} Fixed the hyperparameters of each party including the number of samples $n$, batch size $b$, the number of epochs $T$, Gaussian noise multiplier $\sigma$ and overall $\delta$, we increase the number of parties $k$ and calculate the privacy loss of Priv-Baseline and FedOnce-L1. The results are presented in Fig.~\ref{fig:exp_priv_loss}, from which we observe that the privacy loss of FedOnce-L1 increases much slower than Priv-Baseline when $k>1$. Hence, in FedOnce-L1, each party receives a higher privacy budget, thus incurring less performance loss.

\begin{figure}[htbp]
    \centering
    \includegraphics[width=.7\linewidth]{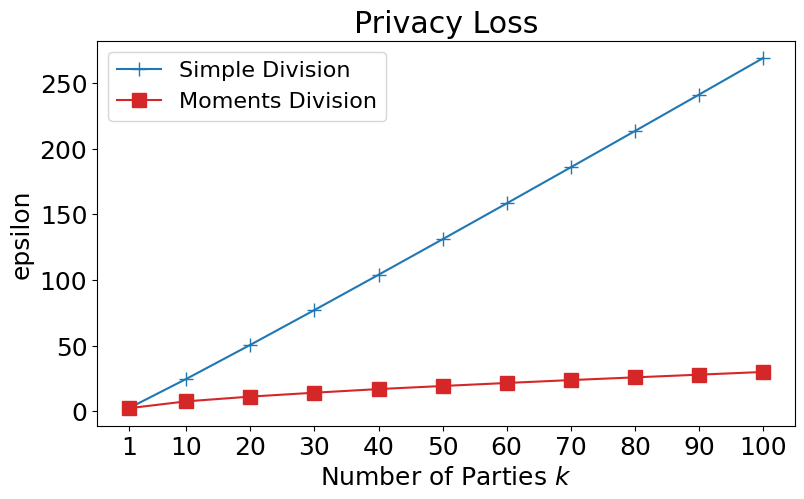}
    \caption{Privacy loss accumulation when $n=60000$, $b=128$, $T=50$, $\sigma=1$, $\delta=10^{-5}$}
    \label{fig:exp_priv_loss}
\end{figure}

\noindent\textbf{Performance.} Despite the much lower privacy loss compared to simple division, FedOnce-L1 still suffers significant performance loss at a large $k$ like other vertical federated learning algorithms. Therefore, we set $k=4$ and report the performance under different overall privacy budgets $\varepsilon$ on six public datasets. \textit{Fashion-MNIST} and \textit{KMNIST}, which fail to produce reasonable accuracy, are not included in this experiment. The results are summarized in Table~\ref{tab:exp_dp_perf}. 

Two observations can be made from Table~\ref{tab:exp_dp_perf}. First, FedOnce-L1 significantly outperforms Priv-Baseline with the same privacy budget $\varepsilon$. Second, compared with FedOnce-L1 with $\varepsilon=\infty$ (no noise is added), FedOnce-L1 has a close performance under a modest $\varepsilon$ in most datasets. Additionally, the performance of FedOnce-L1 with a small $\varepsilon$ (e.g., $\varepsilon=2$) can be significantly affected by the noise, implying more advanced privacy mechanisms are desired.

\subsection{Noise on Representations}\label{subsec:noise_repr}
In this experiment, to preserve the privacy of representations, we investigate the effect of adding noise to representations on the performance of FedOnce-L0. Specifically, independent Gaussian noise of scale $\sigma_r$ is added to all the representations. We report the average performance of all the parties on two real-world federated datasets in Table~\ref{tab:noise_repr}.

\begin{table}[htpb]
    \centering
    \setlength\tabcolsep{4pt}
    \caption{Performance of FedOnce-L0 with Gaussian noise of scale $\sigma_r$ on representations (bold values mean outperforming Solo)}
    \begin{tabular}{c c c c c c c}
    \toprule
    \multirow{3}{*}{\textbf{Dataset}} & \multicolumn{6}{c}{\textbf{Accuracy/RMSE}} \\
        \cmidrule{2-7}
        & \multicolumn{5}{c}{\textbf{FedOnce-L0 w/ different $\sigma_r$}} & \multirow{2}{*}{\textbf{Solo}} \\ \cmidrule{2-6}
        & 0.0 & 0.1 & 0.5 & 1.0 & 1.5 & \\
        \midrule
        NUS-WIDE & \textbf{84.76\%} & \textbf{84.64\%} & \textbf{82.75\%} & \textbf{81.45\%} & 81.05\% & 80.71\% \\
        MovieLens & \textbf{0.9373} & \textbf{0.9578} & \textbf{0.9690} & \textbf{0.9679} & 0.9969 & 0.9835 \\
\bottomrule
    \end{tabular}
    \label{tab:noise_repr}
\end{table}

As can be observed from Table~\ref{tab:noise_repr}, the performance of FedOnce-L0 drops below Solo at a relatively large scale of noise (e.g., $\sigma_r=1.5$) on both datasets. This observation indicates that techniques like privacy-preserving data releasing can potentially be used to protect representations, whereas the added noise must be restricted to a small scale (e.g., $\sigma_r<1.0$).

\section{Related Work}\label{sec:relatedworks}
\begin{table*}[htpb]
    \begin{minipage}{\linewidth}
    \centering
    \caption{Related work in vertical federated learning}\label{tab:related_work}
    \begin{tabular}{c c c c c}
    \toprule
        \textbf{Algorithm} & \textbf{Label Owner\textsuperscript{\rm 1}} & \textbf{Privacy\textsuperscript{\rm 2}} & \textbf{Model\textsuperscript{\rm 3}} & \textbf{Comm. Rounds\textsuperscript{\rm 4}}  \\ \midrule
        FDML~\cite{hu2019fdml} & All & N/A & NN & Multi \\
        SplitNN~\cite{vepakomma2018split} & Single & N/A & NN & Multi \\
        SecureBoost~\cite{cheng2019secureboost} & Single & HE & GBDT & Multi \\
        Pivot\cite{wu2020privacy}\textsuperscript{\rm 5} & Single & MPC/HE & GBDT/LR & Multi \\
        VF$^2$Boost\cite{fu2021vf2boost}\textsuperscript{\rm 5} & Single & HE & GBDT & Multi \\
        \cite{lou2018uplink,lou2020uplink} &  Single & DP-S & FW & Multi \\
        \cite{yao2019privacy} & Single & DP-S & LR/HTL & Multi \\
        \midrule
        \textbf{FedOnce} & \textbf{Single} & \textbf{DP-M} & \textbf{NN} & \textbf{Single} \\
        \bottomrule
    \end{tabular} \\
    \end{minipage}
    
    \textsuperscript{\rm 1} The number of parties that own the labels. Single: Only one party owns the labels, All: all parties own the labels \\
    \textsuperscript{\rm 2} Privacy mechanism used in the algorithm. N/A: Not specified in details, HE: homomorphic encryption, MPC: secure multi-party computation, DP-S: differential privacy with simple division among parties, DP-M: differential privacy with moments division among parties; \\
    \textsuperscript{\rm 3} Model supported by the algorithm. NN: neural networks, GBDT: gradient boosting decision trees, LR: logistic regression, FW: Floyd–Warshall algorithm, HTL: hypothesis transfer learning. \\
    \textsuperscript{\rm 4} Communication rounds required for the algorithm. Multi: multiple rounds are required, Single: single round is required. \\
    \textsuperscript{\rm 5} Pivot~\cite{wu2020privacy} and VF$^2$Boost~\cite{fu2021vf2boost} have the \textit{same} accuracy as SecureBoost with additional techniques on encryption.
    \vspace{-5pt}
\end{table*}

In this section, we review the literature in three aspects and summarize the related work in Table~\ref{tab:related_work}.

\noindent\textbf{Vertical Federated Learning.}
Except for a study \cite{hu2019fdml} that investigates the scenario where labels are shared across all the parties, most existing studies in vertical federated learning focus on another more practical scenario where only one party holds the labels. Some tree-based approaches, including \textit{SecureBoost} \cite{cheng2019secureboost}, Pivot~\cite{wu2020privacy}, and VF$^2$Boost~\cite{fu2021vf2boost}, are proposed to enable vertical federated learning in gradient boosting decision trees \cite{chen2016xgboost}. Besides, federated random forest is also studied in \cite{liu2020federated}. \textit{SplitNN} \cite{vepakomma2018split} is proposed to collaboratively train neural networks by splitting the model among parties and exchanging gradients and outputs in each epoch. \wzm{These approaches that require multi-round communication suffer large communication cost, whereas our proposed FedOnce only requires one-round communication, leading to much smaller communication cost.} Besides trees and neural networks, there are also studies on other machine learning algorithms such as linear regression \cite{zhang2021secure} and logistic regression \cite{hu2019learning,liu2020communication}. However, these algorithms are usually incapable to handle complex tasks.

\noindent\textbf{Privacy in Federated Learning.}
Federated learning confronts two major privacy threats. First, during the training, intermediate results (e.g., gradients) could be vulnerable to backdoor attack \cite{wang2020attack} or reconstruction attack \cite{geiping2020inverting}. This privacy risk can be addressed by multiple methods including homomorphic encryption \cite{cheng2019secureboost}, secure multi-party computation \cite{wu2020privacy,fu2021vf2boost}, and local differential privacy~\cite{kairouz2014extremal}. Second, after the training, the released model could be exposed to the membership inference attack~\cite{Shokri2017MembershipIA}, which can be defended by ensuring differential privacy of the released model (namely \textit{global differential privacy}). This defense is different from local differential privacy since it protects against outside attackers instead of malicious aggregators.

Nevertheless, existing studies on global differential privacy under vertical federated learning suffer significant performance loss. One study \cite{xu2019achieving} achieves differential privacy by \textit{objective perturbation}, which is often intractable in practice compared to \textit{gradient perturbation} according to \cite{wang2018empirical}. The other two studies \cite{lou2018uplink,yao2019privacy} apply differential privacy based on gradient perturbation, but they both focus on inner-party privacy instead of inter-party privacy. Specifically, in both studies, the simple composition is directly applied in the analysis of privacy loss across parties, leading to excessive noise. In this paper, we apply moments accountant to analyze the inter-party privacy loss, thus effectively reducing the overall privacy loss compared to simple composition.

\noindent\textbf{Communication-Efficient Federated Learning.}
Most existing studies in communication-efficient federated learning \cite{bernstein2018signsgd,jin2020stochastic,hamer2020fedboost,guha2019one,zhou2020distilled} focus on horizontal federated learning. These approaches, requiring each party to train independently, cannot be applied to vertical federated learning where only one party holds the labels. Though some approaches \cite{chen2020vafl,feng2020multi,cheng2019secureboost} study the communication efficiency in vertical federated learning, they all require multiple communication rounds and a certain level of synchronization. As observed from Table~\ref{tab:related_work}, vertical federated learning with one-shot communication remains unexplored.
 
\section{Conclusion}\label{sec:conclusion}
In this paper, we propose FedOnce, a novel one-shot algorithm for vertical federated learning. Moreover, we develop a privacy mechanism for FedOnce (FedOnce-L1) under the notion of differential privacy and refine inter-party privacy loss. Our experiments demonstrate that FedOnce-L0 achieves impressive performance compared to state-of-the-art vertical federated learning algorithms with only one-shot communication. Besides, FedOnce-L1 significantly outperforms the baseline in our evaluation.

\ifCLASSOPTIONcompsoc
  \section*{Acknowledgments}
\else
  \section*{Acknowledgment}
\fi

This research is supported by the National Research Foundation, Singapore under its AI Singapore Programme (AISG Award No: AISG2-RP-2020-018). Any opinions, findings and conclusions or recommendations expressed in this material are those of the authors and do not reflect the views of National Research Foundation, Singapore. Qinbin is also in part supported by a Google PhD Fellowship.

\ifCLASSOPTIONcaptionsoff
  \newpage
\fi

\bibliography{reference.bib}
\bibliographystyle{IEEEtran}

\vspace{-20pt}
\begin{IEEEbiography}
[{\includegraphics[width=1in,height=1.25in, clip,keepaspectratio]{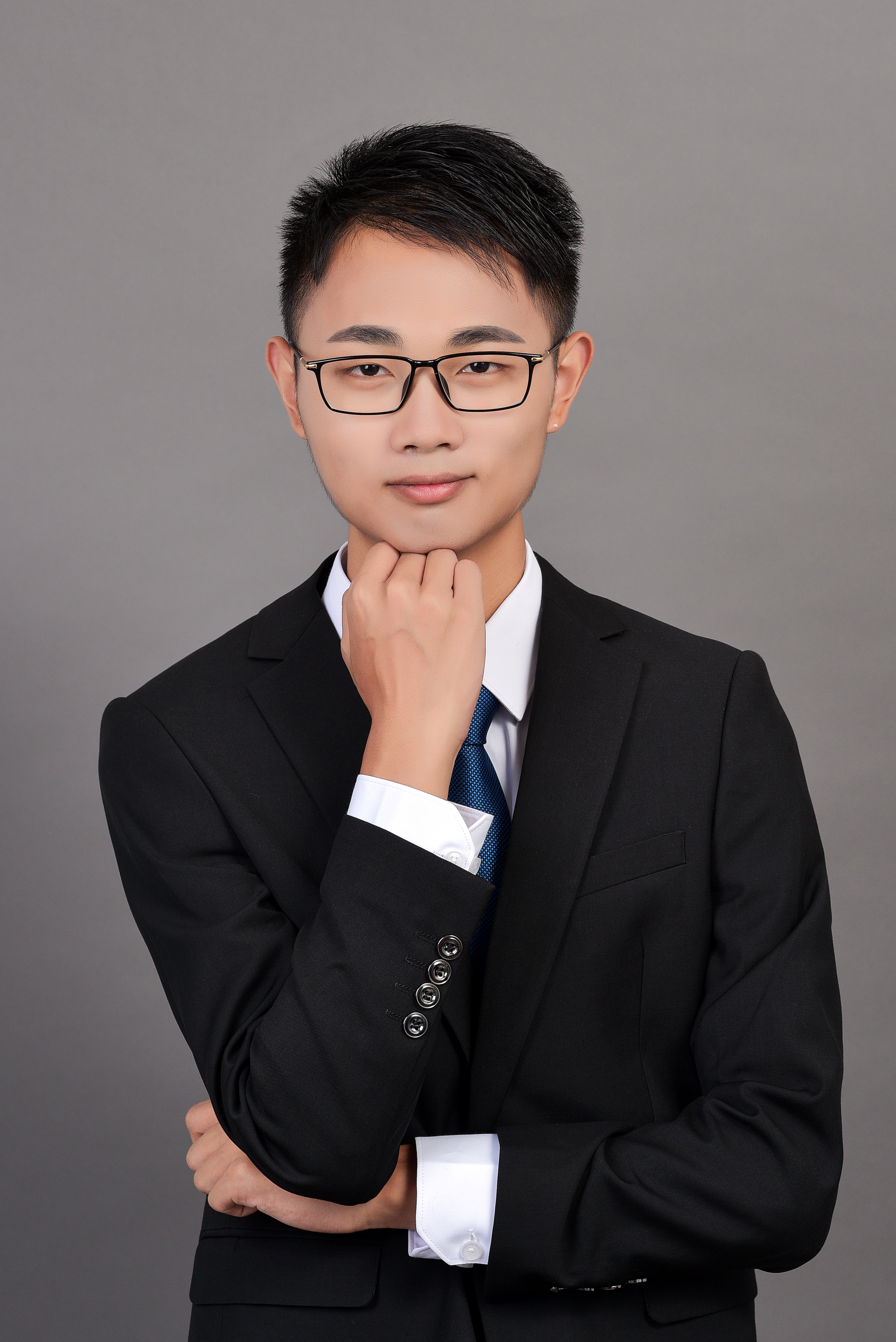}}]{Zhaomin Wu} is a Ph.D. candidate in School of Computing of National University of Singapore. He received the bachelor degree in computer science from Huazhong University of Science and Technology (2015-2019). His current research interests include federated learning and privacy.
\end{IEEEbiography}
\vspace{-20pt}
\vspace{-20pt}
\begin{IEEEbiography}[{\includegraphics[width=1in,height=1.25in, clip,keepaspectratio]{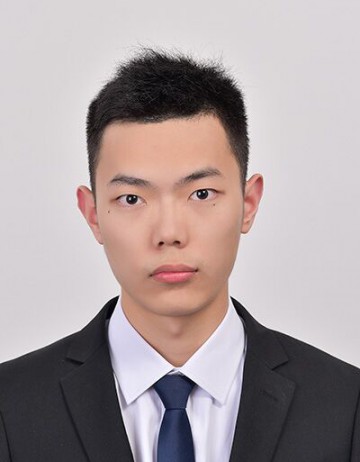}}]{Qinbin Li}
is currently a Ph.D. candidate in School of Computing of National University of Singapore. His current research interests include machine learning, federated learning, and privacy.
\end{IEEEbiography}
\vspace{-20pt}
\vspace{-20pt}
\begin{IEEEbiography}
[{\includegraphics[width=1in,height=1.25in, clip,keepaspectratio]{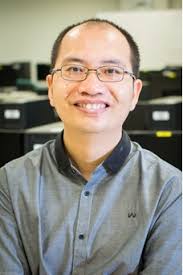}}]{Bingsheng He}
is an Associate Professor in School of Computing of National University of Singapore. He received the bachelor degree in computer science from Shanghai Jiao Tong University (1999-2003), and the PhD degree in computer science in Hong Kong University of Science and Technology (2003-2008). His research interests are high performance computing, distributed and parallel systems, and database systems.
\end{IEEEbiography}
\appendices

\section{Choice of Hyperparameters}\label{apdx:hyper}
For each dataset, we summarize the hyperparameters of FedOnce-L1 in Table~\ref{tab:hyper_l1}. In the table, $\eta$ refers to the learning rate, $\lambda$ refers to weight decay, $b$ refers to batch size, $T$ refers to the number of epochs, $d$ refers to the dimension of representations. $f$ indicates the frequency of permutation matrix $P$ to be updated. For example, if the update frequency is 3, $P$ will be updated every three epochs. $\varepsilon$ refers to the overall privacy budget. $\Omega$ refers to the clipping norm. \textit{SGD} refers to stochastic gradient descent without momentum, i.e., $momentum=0$. We adopt the SGD optimizer in FedOnce-L1 because our analysis of differential privacy is based on SGD.

\begin{table}[htpb]
    \centering
    \small
    \caption{Hyperparameters of FedOnce-L1 on each dataset}
    \label{tab:hyper_l1}
    \setlength\tabcolsep{2pt}
    \begin{tabular}{ccccc ccccc ccc}
    \toprule
        \multirow{2}{*}{\textbf{Dataset}} & \multirow{2}{*}{$\varepsilon$} & \multicolumn{4}{c}{\textbf{Guest Model}} & \multicolumn{4}{c}{\textbf{Host Model}} & \multirow{2}{*}{$d$} & \multirow{2}{*}{$f$} & \multirow{2}{*}{$\Omega$} \\ \cmidrule(lr){3-6}\cmidrule(lr){7-10}
         & & $\eta$ & $\lambda$ & $b$ & $T$ & $\eta$ &  $\lambda$ & $b$ & $T$ \\
         \midrule
         \multirow{4}{*}{gisette} & 2 & 0.3 & 0 & 32 & 10 & 0.3 & 0 & 32 & 10 & 3 & 1 & 1.0  \\
            & 4 & 0.2 & 0 & 32 & 6 & 0.2 & 0 & 32 & 30 & 3 & 1 & 1.0  \\
            & 6 & 0.3 & 0 & 128 & 10 & 0.6 & 0 & 128 & 15 & 3 & 1 & 1.5  \\
            & 8 & 0.3 & 0 & 128 & 10 & 0.6 & 0 & 128 & 40 & 3 & 1 & 1.5  \\
        \midrule
        \multirow{4}{*}{phishing} & 2 & 0.3 & 0 & 32 & 10 & 0.3 & 0 & 32 & 30 & 3 & 1 & 1.0  \\
            & 4 & 0.2 & 0 & 32 & 10 & 0.2 & 0 & 32 & 30 & 3 & 1 & 1.0  \\
            & 6 & 0.3 & 0 & 32 & 10 & 0.1 & 0 & 32 & 30 & 3 & 1 & 1.0  \\
            & 8 & 0.3 & 0 & 128 & 10 & 0.3 & 0 & 128 & 40 & 3 & 1 & 1.5  \\
        \midrule
        
        \multirow{4}{*}{\makecell{MNIST\\Fashion-MNIST}} & 2 & 0.5 & 0 & 256 & 6 & 0.8 & 0 & 256 & 15 & 8 & 1 & 1.5  \\
            & 4 & 0.5 & 0 & 256 & 6 & 0.9 & 0 & 256 & 15 & 8 & 1 & 1.5  \\
            & 6 & 0.5 & 0 & 256 & 7 & 0.9 & 0 & 256 & 15 & 8 & 1 & 1.5  \\
            & 8 & 0.5 & 0 & 512 & 10 & 0.9 & 0 & 512 & 30 & 8 & 1 & 1.5  \\
        \midrule
        
        \multirow{4}{*}{\makecell{UJIIndoorLoc\\Superconduct}} & 2 & 0.4 & 0 & 128 & 10 & 0.5 & 0 & 128 & 10 & 6 & 1 & 1.5  \\
            & 4 & 0.4 & 0 & 128 & 12 & 0.5 & 0 & 128 & 20 & 6 & 1 & 1.5  \\
            & 6 & 0.4 & 0 & 128 & 12 & 0.5 & 0 & 128 & 30 & 6 & 1 & 1.5 \\
            & 8 & 0.4 & 0 & 128 & 12 & 0.3 & 0 & 128 & 60 & 6 & 1 & 1.5 \\
            
        \bottomrule
    \end{tabular}
\end{table}

\section{Supplement Experiment on CIFAR-10}
\wzm{To evaluate the performance of FedOnce on complex image datasets, additional experiments are conducted on CIFAR-10 which consists of 60000 32x32 color images in 10 classes. We divide the features both horizontally and vertically into four parties; each party holds a 16x16 partial image with full three channels. Since it is an open problem to fit more advanced deep learning models in unsupervised learning contexts, we use CNN0 in Fig. 4 (the same model we use for other image datasets) as a starting point to demonstrate our proposal on more complex datasets. The experimental setting is similar to that in Section~5.3. We present the performance of each approach as the communication cost increases in Fig.~\ref{fig:comm_cifar10}.}

\wzm{As can be observed from Fig.~\ref{fig:comm_cifar10}, multi-round FedOnce 1) significantly outperforms SplitNN when the communication size is small and 2) consistently outperforms SplitNN as the communication size increases.}

\begin{figure}[t!]
    \centering
    \includegraphics[width=.75\linewidth]{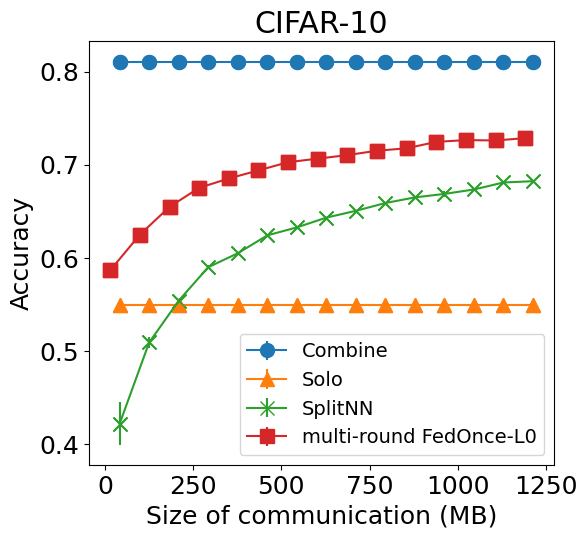}
    \caption{\wzm{Performance of multi-round FedOnce-L0 and FedOnce-L0 (i.e., the first data point of multi-round FedOnce-L0) on CIFAR-10 ($k=4$)}}
    \label{fig:comm_cifar10}
\end{figure}

\section{Discussion of Laggy Parties}
\wzm{During the distributed machine learning process, some parties may respond slowly due to communication bottlenecks. The existence of these parties, denoted as \textit{laggy parties}, is one of the major hurdles of federated learning. Although FedOnce significantly reduces the effect of laggy parties by communicating for only one round, the impact of laggy parties has not been eliminated. Therefore, in this section, we present an intuitive solution to tackle this issue in FedOnce.}

\wzm{Setting a threshold $t_r$ for the responding time, all the parties that respond later than $t_r$ are regarded as laggy parties which do not participate in the training of FedOnce. Specifically, all the representations from laggy parties are set to zero. Thus, FedOnce can still learn from the host party and the representations from other parties.}

\end{document}